%

\documentclass{ws-ijfcs}
\usepackage{cite}
\usepackage{enumerate}
\usepackage{url}
\usepackage{complexity}
\usepackage{amssymb}
\usepackage{endnotes}
\usepackage{listings}
\usepackage{mathtools}
\usepackage{verbatim}
\usepackage{mathrsfs}
\usepackage{shuffle}
\usepackage{longtable}
\usepackage{enumitem}
\usepackage{etoolbox}
\usepackage{float}
\usepackage{color}
\usepackage{wasysym}

\DeclarePairedDelimiter{\ceil}{\lceil}{\rceil}

\urlstyle{same}

\newclass{\DCM}{DCM}
\newclass{\DCMNE}{DCM_{NE}}
\newclass{\TwoDCM}{2DCM}
\newclass{\NCM}{NCM}
\newclass{\DPCM}{DPCM}
\newclass{\NPCM}{NPCM}
\newclass{\TRE}{TRE}

\newclass{\DPDA}{DPDA}
\newclass{\NPDA}{NPDA}
\newclass{\NCA}{NCA}
\newclass{\DCA}{DCA}

\newcommand\set[1]{\left\{#1\right\}}

\newcommand{\sst}{\ensuremath{\mid}}

\DeclareMathOperator{\pref}{pref}
\DeclareMathOperator{\suff}{suff}
\DeclareMathOperator{\infx}{inf}
\DeclareMathOperator{\outf}{outf}

\newcounter{theo}

\newcounter{proposition}
\newcounter{corollary}
\newcounter{definition}
\newcounter{example}

\newtheorem{prop}[proposition]{Proposition}
\newtheorem{cor}[corollary]{Corollary}
\newtheorem{defin}[definition]{Definition}
\newtheorem{exam}[example]{Example}

\newtheorem{claim}{Claim}

\begin{document}

\markboth{J. Eremondi, O. H. Ibarra, I. McQuillan}
{On the Density of Context-Free and Counter Languages}

%
\catchline{}{}{}{}{}
%

\title{On the Density of Context-Free and Counter Languages\thanks{Electronic version of an article published as [International Journal of Foundations of Computer Science, 29, 02, 2018, 233--250] [10.1142/S0129054118400051] \copyright\ [copyright World Scientific Publishing Company] [https://www.worldscientific.com/worldscinet/ijfcs]}}

\author{Joey Eremondi}

\address{Department of Computer Science\\ University of British Columbia, Vancouver, BC V6T 1Z4, Canada\\
\email{joey.eremondi@alumni.ubc.ca}}

\author{Oscar H. Ibarra\footnote{Supported, in part, by
NSF Grant CCF-1117708.}}

\address{Department of Computer Science\\ University of California, Santa Barbara, CA 93106, USA\\
\email{ibarra@cs.ucsb.edu}
}

\author{Ian McQuillan\footnote{Supported, in part, by a grant from the Natural Sciences and Engineering Research Council of Canada.}}

\address{Department of Computer Science, University of Saskatchewan\\
	Saskatoon, SK S7N 5A9, Canada\\
\email{mcquillan@cs.usask.ca}
}

\maketitle

\begin{history}
\received{(Day Month Year)}
\accepted{(Day Month Year)}
\comby{(xxxxxxxxxx)}
\end{history}

\begin{abstract}
A language $L$ is said to be {\em dense} if every word in the universe is an 
infix of some word in $L$. This notion has been generalized from the infix operation to arbitrary word operations $\varrho$ in place of the infix operation
($\varrho$-dense, with infix-dense being the standard notion of dense). It is shown here that it is decidable, for a language $L$ accepted by a one-way nondeterministic reversal-bounded pushdown automaton, whether $L$ is infix-dense. However, it becomes
undecidable for both deterministic pushdown automata (with no reversal-bound),
and for nondeterministic one-counter automata. When examining suffix-density,
it is undecidable for more restricted families such as deterministic one-counter automata that make three reversals on the counter, but it is decidable with less reversals.
Other decidability results are also presented on dense languages, and contrasted with a marked version called $\varrho$-marked-density. Also, new languages are demonstrated to be outside various deterministic language families 
after applying different deletion operations from smaller families.
Lastly, bounded-dense languages are defined and examined.
\end{abstract}

\keywords{counter machines; pushdown automata; decidability; density; deletion.}

\section{Introduction}

A language $L\subseteq \Sigma^*$ is dense if the set of all infixes of $L$ is equal to $\Sigma^*$ \cite{TheoryOfCodes}. This notion is relevant to the
theory of codes. Indeed, a language being dense is connected with the notions of independent sets \cite{CodesHandbook}, maximal independent sets, codes \cite{JKT}, and
disjunctive languages \cite{denseIto,shyr}.

Dense languages have been studied in \cite{denseIto,shyr} and generalized
from density to $\varrho$-density \cite{JKT}, 
where $\varrho$ is an arbitrary word
operation used in place of the infix-operation in the definition.
Some common examples are prefix-dense (coinciding with left dense in
\cite{denseIto}), suffix dense (coinciding with right dense in \cite{denseIto}),
infix dense (usual notion of density), outfix dense, embedding dense, and others from \cite{JKT}. Each type connects with a generalized notion of independent sets and codes.

It has long been known that universality of a language $L$ 
(is $L = \Sigma^*$?) is 
undecidable for $L$  accepted by a one-way nondeterministic one-counter automaton whose counter makes only one reversal, i.e., \ in an accepting computation, after decreasing the counter, it
can no longer increase again \cite{Baker1974}. This shows immediately that with the identity operation, it is undecidable if $L$ in this family is identity-dense. 
In contrast, the universality problem is known to be decidable for
one-way deterministic reversal-bounded multicounter 
languages \cite{Ibarra1978}, but these languages are not closed under taking suffix, infix, or outfix \cite{EIMDeletion2015}.
However, to decide the property of infix-density, in this paper we can show
contrasting results.
\begin{enumerate}
\item Infix-density is decidable for $L$ accepted by a nondeterministic pushdown automaton where the pushdown is reversal-bounded (there is at most a fixed number of switches between increasing and decreasing the size of the pushdown).
\item Infix-density is undecidable for $L$ accepted by a nondeterministic one-counter automaton (with no reversal-bound).
\item Infix-density is undecidable for $L$ accepted by a deterministic pushdown automaton (with no reversal-bound).
\end{enumerate}
Thus, it is surprisingly possible to decide if the set of all infixes of a nondeterministic reversal-bounded pushdown automaton gives universality, when it is undecidable with the identity operator for much smaller families. 

Furthermore, if the question is altered to change the type of density from infix-density to either suffix-density or prefix-density, then it is undecidable even for nondeterministic one-counter automata that make one counter reversal 
(coinciding with the result for identity-density). 
Suffix-density is decidable however for deterministic one-counter automata that makes one counter reversal, but is undecidable when there is either two more reversals, or two counters that both make one reversal. Thus suffix-density is often impossible to decide when infix-density is decidable. Prefix density is decidable for all deterministic reversal-bounded multicounter languages.

Contrasts are made between deciding if applying an operation $\varrho$
to a language gives $\Sigma^*$ and deciding if 
$\$ \Sigma^* \$$ (with $\$ \notin\Sigma$) is a subset of $\varrho$ applied
 to $L \subseteq (\Sigma \cup \{\$\})^*$. If this condition, 
$\$\Sigma^*\$ \subseteq \varrho(L)$, is true, the language is said to be $\varrho$-marked-dense.
In contrast to infix-density, infix-marked-density is undecidable with
 only one-way deterministic one-counter 3-reversal-bounded languages,
 and for the outfix operation with many families as well.
 Results are summarized in Table \ref{tab:summary}.

 In addition, new languages $L$ are established that can be accepted by
 a number of automata classes (deterministic one-counter machines that
 are 3-reversal-bounded, deterministic 2-counter machines that are 
 1-reversal-bounded, nondeterministic one-counter one-reversal-bounded machines), but taking any of the set of infixes, suffixes, or outfixes of $L$ produces languages that cannot be accepted by deterministic machines with
 an unrestricted pushdown and a fixed number of reversal-bounded counters.
 Hence, these deletion operations can create some very complex languages.
 It has been previously shown in \cite{EIMDeletion2015} though, 
 that the set of all
 infixes or suffixes of all deterministic one-counter one-reversal-bounded
 languages only produce deterministic reversal-bounded multicounter languages.
 Finally, the notion of $\varrho$-bounded-dense languages is defined and
 examined.

\section{Definitions}

In this section, some preliminary definitions are provided.

The set of non-negative integers is represented by $\mathbb{N}_0$. For $c \in \mathbb{N}_0$,
let $\pi(c)$ be $0$ if $c=0$, and $1$ otherwise.

We use standard notations for formal languages,
referring the reader to \cite{HU}.
The empty word is denoted by $\lambda$.
We use $\Sigma$ and $\Gamma$ to represent finite alphabets,
with $\Sigma^*$ as the set of all words over $\Sigma$
and $\Sigma^+ = \Sigma^* - \{\lambda\}$. 
For a word $w \in \Sigma^*$, if $w = a_1 \cdots a_n$
where $a_i \in \Sigma$, $1\leq i \leq n$, the length of $w$ is denoted by $|w|=n$,
and the reversal of $w$ is denoted by $w^R = a_n \cdots a_1$. Given a language $L\subseteq \Sigma^*$,
the complement of $L$ over $\Sigma^*$, $\Sigma^* - L$ is denoted by $\overline{L}$.

The definitions of deterministic and nondeterministic finite
automata, deterministic and nondeterministic pushdown
automata, deterministic Turing Machines, and instantaneous descriptions will be used from \cite{HU}.


Notation for variations of word operations
which we will use throughout the paper are presented next. 

\begin{defin}
\label{def:opGeneralize}
For a language $L \subseteq \Sigma^*$, 
the prefix, suffix, infix, and outfix operations, respectively, are defined
as follows:
\begin{center}
$\begin{array}{ll}
\ \pref(L) = \set{w \sst wx \in L, x \in \Sigma^* },	&\ \suff(L) = \set{w \sst xw \in L, x \in \Sigma^* },\\
\ \infx(L) = \set{w \sst xwy \in L, x,y \in \Sigma^* },	&\ \outf(L) = \set{xy \sst xwy \in L, w \in \Sigma^* }.
\\
\end{array}$
\end{center}
\end{defin}


Different types of density are now given. 
\begin{defin}
Let $\Sigma$ be an alphabet, and
$\varrho$ an operation from $\Sigma^*$ to $\Sigma^*$. Then 
$L \subseteq \Sigma^*$ is $\varrho$-dense if $\varrho(L) = \Sigma^*$.
\end{defin}


The reader is referred to \cite{Ibarra1978} and \cite{Baker1974}
for a comprehensive introduction to counter machines.
A {\em nondeterministic multicounter machine} is an automaton
which, in addition to having a finite set of states,
has a fixed number of counters.
At any point, the counters may be incremented, decremented,
or queried for equality to zero.
For our purposes, it will accept a word by final state.

Formally, a {\em one-way $k$-counter machine} is a tuple 
$M = (k,Q, \Sigma, \lhd, \delta,q_0, F)$,
where $Q, \Sigma, \lhd, q_0, F$ are respectively the set of states, input alphabet, right input end-marker, initial state (in $Q$) and  accepting states (a subset of $Q$).
The transition function $\delta$ (defined as in \cite{Chiniforooshan2012}) is a relation from 
$Q \times (\Sigma\cup \{\lhd\}) \times \{0,1\}^k$ to $Q \times \{{\rm S},{\rm R}\} \times  \{-1, 0, +1\}^k$, such
that if $\delta(q,a,c_1, \ldots, c_k)$ contains $(p,d,d_1, \ldots, d_k)$ and $c_i =0$ for some $i$, then $d_i \geq 0$ (to prevent negative values in any counter).
The symbols ${\rm S}$ and ${\rm R}$ indicate the direction that the input tape head moves, either {\em stay} or {\em right}. Further, $M$
is {\em deterministic} if $\delta$ is a partial function.
A configuration of $M$ is a $k+2$-tuple $(q, w\lhd , c_1, \ldots, c_k)$ representing
that $M$ is in state $q$, with $w\in \Sigma^*$ still to read as input, and $c_1, \ldots, c_k\in \mathbb{N}_0$ are the
contents of the $k$ counters. The derivation relation $\vdash_M$ is defined between configurations, where 
$(q, aw, c_1, \ldots , c_k) \vdash_M (p, w', c_1 + d_1, \ldots, c_k+d_k)$, if
$(p, d, d_1, \ldots, d_k) \in \delta(q, a, \pi(c_1), \ldots, \pi(c_k))$ where $d \in \{{\rm S}, {\rm R}\}$ and 
$w' =aw$ if $d={\rm S}$, and $w' = w$ if $d={\rm R}$. Let $\vdash^*_M$ be the reflexive, transitive closure of $\vdash_M$.
A word $w\in \Sigma^*$ is accepted by $M$ if 
$(q_0, w\lhd, 0, \ldots, 0) \vdash_M^* (q, \lhd, c_1, \ldots, c_k)$, for some $q \in F$, and 
$c_1, \ldots, c_k \in \mathbb{N}_0$. The language accepted by $M$, denoted by $L(M)$, 
is the set of all words accepted by $M$.
Furthermore, $M$ is $l$-reversal-bounded if it operates in such a way that in every accepting computation, the count on each counter alternates between increasing and decreasing at most $l$ times.
%

We will use the following notations for families of languages
(and classes of one-way machines): 
\begin{enumerate}
\item
$\NCM(k,l)$ for  
nondeterministic $l$-reversal-bounded $k$-counter languages,
\item
$\NCM = \bigcup_{k,l \geq 0} \NCM(k,l)$,
\item
$\NCA$ for nondeterministic 1-counter languages (no reversal bound),
\item
$\NPDA$ for
nondeterministic pushdown languages,
\item
$\NPDA(l)$ for nondeterministic $l$-reversal-bounded pushdown languages,
\item
$\NPCM$ for languages accepted by
nondeterministic machines with one unrestricted pushdown
and a fixed number of 
reversal-bounded counters.
\end{enumerate}
\noindent
For each of the above, replacing
N with D gives the deterministic variant.

It is easy to show that a counter that makes $l \ge 1$
reversals can be simulated by $\ceil{\frac{l+1}{2}}$
1-reversal-bounded counters \cite{Ibarra1978}.
So, e.g., for each $l \geq 1$,
$\DCM(1,l) \subseteq \DCM(\ceil{\frac{l+1}{2}}, 1)$
and thus $\DCM(1,3) \subseteq \DCM(2,1)$.
Thus the undecidability results for machines
with $k$ $l$-reversal-bounded counters
also carry over to machines with $k\ceil{\frac{l+1}{2}}$
1-reversal-bounded counters
(e.g., $ \DCM(1,3)$ to $\DCM(2,1)$).
%
%

We give some examples below to illustrate the workings of the
reversal-bounded counter machines.

\begin{exam}
Let $L = \{a^ib^ja^ib^j \mid i,j \ge 1\}$.  This language
(which is not context-free) can be
accepted by a $\DCM(2,1)$ which, when given
$a^ib^ja^kb^l$, reads the first segment
$a^ib^j$ and stores $i$ and $j$ in counters
$C_1$ and $C_2$, respectively. Then, it
reads the next segment $a^kb^l$ and verifies
that $i = k$ and $j = l$, by decrementing 
$C_1$ (resp., $C_2$) when reading $a^k$ (resp., $b^l)$.
\end{exam}

\begin{exam}
\label{exam2}
The Post Correspondence Problem (PCP) \cite{Post} is the problem of
deciding, given two $n$-tuples of strings $(X,Y)$, where $X= (x_1, \ldots , x_n)$
and $Y = (y_1,  \ldots, y_n)$ with each $x_i, y_i  \in \Sigma^+, 1 \leq i \leq n$, whether it
has a solution, i.e., whether there exists $i_1, \ldots, i_k, k \ge 1, 1 \leq i_l \leq n,$ for $1 \leq l \leq k$ such that 
$x_{i_1}  \cdots x_{i_k} = y_{i_1} \cdots y_{i_k}$.  
It is known that PCP is  undecidable for $|\Sigma| \ge 2$ \cite{Post}.

Now consider the following variation of PCP, called Permuted PCP:
Given $X$ and $Y$ as above, does
there exist $k \ge 1, I = (i_1, \ldots, i_k)$, 
and $J = (j_1, \ldots, j_k)$, $ 1 \leq i_l \leq n, 1 \leq j_l \leq n$,
for $1 \leq l \leq k$, $I$ is a permutation
of $J$, such that $x_{i_1} \cdots x_{i_k} = y_{j_1} \cdots y_{j_k}$.  It was 
shown in \cite{IbarraKim} that Permuted PCP is decidable using a 
restricted model of a multihead pushdown automaton whose
emptiness problem is decidable. Below, we use the technique
in \cite{IbarraKim} to show this result using $\NCM$s. 

Given $(X,Y)$, let $L(X,Y)  = \{ w \mid $ for some $k \ge 1,
w = x_{i_1} \cdots x_{i_k} = y_{j_1} \cdots y_{j_k}, (i_1,  \ldots, i_k)$
is a permutation of $(j_1, \ldots, j_k) \}$.   We can construct an
$\NCM$ $M$ to accept $L(X,Y)$.  $M$ has the tuples
$X = (x_1, \ldots, x_n)$ and $Y = (y_1, \ldots, y_n)$ in its
finite-state control and has $2n$ 1-reversal-bounded
counters $C_1, \ldots, C_n, D_1, \ldots, D_n$.  $M$
operates as follows, given input $w$.  It reads
the input $w$ and, in parallel, nondeterministically
guesses two decompositions of $w$:
$w = x_{i_1}\cdots x_{i_r}$  and $w = y_{j_1} \cdots y_{j_s}$
while incrementing counter $C_p$ every time it guesses
and verifies that $x_{i_t} = x_p$  ($t = 1, \ldots, r$) and incrementing
$D_p$ every time it guesses and verifies that $y_{j_t} = y_p$
($t = 1, \ldots, s$).  When $M$ reaches the end of the input,
it decrements the counters and accepts if and
only if  $C_i = D_i$ for $1 \le i \le n$.   Since the
emptiness problem for $\NCM$ is decidable \cite{Ibarra1978},
it follows that Permuted PCP is decidable.
\end{exam}

\begin{exam}
Let $(X,Y)$ be as above, and let
$L'(X,Y)  = \{ w ~|~ $ for some $k \ge 1,
w = x_{i_1} \cdots x_{i_k}, w^R = y_{j_1} \cdots y_{j_k}, (i_1,  \ldots, i_k)$
is a permutation of $(j_1, \ldots, j_k) \}$.
We can construct an $\NPCM$ $M'$ with $2n$ 1-reversal-bounded counters
and whose stack makes only one reversal to accept $L'(X,Y)$.
The idea is for $M'$ to guess the decomposition of $w$,
$w = x_{i_1} \cdots x_{i_r}$ while storing the number
of $x_p$'s there are in string $w$ in counter $C_p$ 
and copying $w$ in the stack. When $M$ reaches
the end of the input, it pops the stack and 
guesses the decmposition $w^R$,
$w^R = y_{j_1} \cdots y_{j_s}$ and storing the
number of $y_p$'s there are in string $w^R$ 
in counter $D_p$.  Finally, $M'$ checks that
$C_i = D_i$ for $1 \le i \le n$.
\end{exam} 

\begin{exam}
Let $M$ be the $\NCM$ accepting the
language $L(X,Y)$ in Example \ref{exam2}.
We can attach a pushdown stack to $M$
and obtain a $\NPCM$ $M'$. Clearly, such a
machine can accept a language $L(X,Y) \cap L$,
where $L$ is a context-free language.
Since the emptiness problem for $\NPCM$ is decidable \cite{Ibarra1978},
it follows that it is decidable, given $(X,Y)$
and an $\NPDA$ $M''$, whether $L(X,Y) \cap L(M'') = \emptyset$.
\end{exam}

\section{Deciding Types of Density}

In addition to examining
decidability of $\varrho$-density, a variant is defined 
called $\varrho$-marked-density that differs from $\varrho$-density
only by an end-marker.
\begin{defin}
Let $\Sigma$ be an alphabet, $\$ \notin \Sigma$, $L \subseteq (\Sigma \cup \{\$\})^*$, and $\varrho$ be an
operation from $(\Sigma \cup \{\$\})^*$ to itself. Then $L$ is $\varrho$-marked-dense if $\$\Sigma^* \$ \subseteq \varrho(L)$.
\end{defin}
It is only the marker $\$$ that differs from the usual
$\varrho$-dense (i.e., $\Sigma^* \subseteq \varrho(L)$ if and only
if $\Sigma^* = \varrho(L)$ for $L \subseteq \Sigma^*$). Yet we will
see differences, as there are cases when the marked version is
undecidable when the unmarked version is decidable.

First, deciding if languages are prefix-dense will be examined. It was recently shown in
\cite{EIMDeletion2015} that $\DCM$ languages are closed under prefix.
The following is a result in that paper.
\begin{prop} \label{cor1}
For $L \in \DCM$, $\pref(L) \in \DCM$.
\end{prop}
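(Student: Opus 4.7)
Given a $\DCM$ $M = (k, Q, \Sigma, \lhd, \delta, q_0, F)$ accepting $L$, my plan is to construct a $\DCM$ $M'$ accepting $\pref(L)$ using a two-phase strategy: $M'$ first simulates $M$ step by step on the input $w$, and then, upon reading the end-marker $\lhd$, $M'$ deterministically decides whether the current configuration of $M$ admits any continuation $x \in \Sigma^*$ on which $M$ would accept $wx$. Without loss of generality I take each of $M$'s counters to be 1-reversal-bounded, using the simulation cited earlier in the paper.

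The central object is the collection of ``live'' configurations. For each state $q$ of $M$ and each per-counter reversal-status vector $\vec{r}$, define
$$S_{q, \vec{r}} = \{ \vec{c} \in \mathbb{N}_0^k : \exists x \in \Sigma^*, M \text{ from configuration } (q, x \lhd, \vec{c}) \text{ with reversal status } \vec{r} \text{ accepts}\}.$$
By the standard semilinearity theorem for reversal-bounded counter machines, each $S_{q, \vec{r}}$ is a semilinear subset of $\mathbb{N}_0^k$, effectively computable from $M$, so ``$\vec{c} \in S_{q, \vec{r}}$'' is a finite Boolean combination of linear-set membership conditions on $\vec{c}$.

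$M'$ uses the same counters as $M$ plus a bounded number of extra 1-reversal-bounded counters. While reading $w$, $M'$ faithfully simulates $M$, so at the end of $w$ its finite state encodes $(q, \vec{r})$ and its counters hold $\vec{c}$. Upon reading $\lhd$, $M'$ enters a stay-mode checking phase: it first duplicates each $c_i$ into several fresh counters in parallel (by decrementing $c_i$ while incrementing the copies), and then deterministically tests whether $\vec{c} \in S_{q, \vec{r}}$ by processing the finitely many linear pieces of $S_{q, \vec{r}}$ one at a time using the copies. The chief obstacle lies in this last check, since testing whether $\vec{c}$ belongs to a linear set $\vec{c_0} + \mathbb{N}_0 \vec{v_1} + \cdots + \mathbb{N}_0 \vec{v_m}$ amounts to a Presburger question on $\vec{c}$ that in general requires producing non-negative integer coefficients; however, because $S_{q, \vec{r}}$ is fixed (independent of the input), the test can be implemented by a deterministic reversal-bounded multicounter routine whose number of extra counters and reversals depends only on $M$, so the resulting $M'$ lies in $\DCM$.
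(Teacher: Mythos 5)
The paper itself gives no proof of this proposition: it is imported wholesale from \cite{EIMDeletion2015}, where the more general fact that $\DCM$ is effectively closed under right quotient with $\NPCM$ languages is established (note $\pref(L)$ is the right quotient of $L$ by $\Sigma^*$). Your two-phase construction is essentially a reconstruction of that argument specialized to quotient by $\Sigma^*$, so the strategy is the right one. The first half is fine modulo routine detail: each $S_{q,\vec{r}}$ is effectively semilinear because one can build an $\NCM$ that accepts $a_1^{c_1}\cdots a_k^{c_k}$ by loading $\vec{c}$ into its counters and then simulating $M$ from $(q,\vec{r})$ on a nondeterministically guessed continuation, and then invoke the semilinearity theorem of \cite{Ibarra1978}.

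The gap sits exactly where you flag ``the chief obstacle'' and then wave it away. You assert that membership of the stored vector $\vec{c}$ in a fixed semilinear set can be decided by a deterministic reversal-bounded routine \emph{because the set is fixed}, but that is not a reason: deciding $\vec{c}\in\vec{c_0}+\mathbb{N}_0\vec{v_1}+\cdots+\mathbb{N}_0\vec{v_m}$ still requires producing the coefficients $\lambda_1,\ldots,\lambda_m$, and determinism is precisely what is in question --- with nondeterminism the test is trivial (repeatedly subtract the periods), so this step carries the entire content of the proposition. The claim is true, but it needs an argument. Two standard ways to supply one: (i) note that $\{a_1^{c_1}\cdots a_k^{c_k}\mid \vec{c}\in S_{q,\vec{r}}\}$ is a bounded $\NCM$ language, hence by \cite{boundedSemilin} (already used elsewhere in this paper) is accepted by a one-way $\DCM$ $D$, and let $M'$ simulate $D$ while sitting on the end-marker by draining its counters to zero one at a time, feeding $D$ one $a_i$ per decrement --- this costs each counter one extra reversal and adds only $D$'s counters; or (ii) apply Presburger quantifier elimination to rewrite membership in each fixed linear set as a Boolean combination of linear inequalities and congruences in $\vec{c}$ with fixed coefficients and moduli, each of which is checkable deterministically on copies of the counters. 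With either of these inserted, your proof goes through; as written, the decisive step is only claimed.
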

A main result in that paper was in fact far more general, showing that
$\DCM$ is closed (with an effective construction) under right quotient with $\NPCM$ languages.
Combining this with the known decidability of the 
inclusion problem for $\DCM$ \cite{Ibarra1978}, the following 
two corollaries are obtained, by testing if $\Sigma^* \subseteq \pref(L)$:
\begin{cor}
For $L_1, L_2 \in \DCM$,
it is decidable whether $\pref(L_1) \subseteq L_2$ and whether $L_1 \subseteq \pref(L_2)$.
\end{cor}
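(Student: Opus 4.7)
The plan is to combine the two facts that the text has already put on the table: first, Proposition \ref{cor1}, which asserts that $\DCM$ is effectively closed under the prefix operation, and second, the decidability of the inclusion problem for $\DCM$, cited from \cite{Ibarra1978}. Both inclusion questions in the statement will then reduce to a single inclusion test between two $\DCM$ languages that we can compute from the inputs.

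More concretely, for the first question, given $L_1, L_2 \in \DCM$, I would invoke the effective construction from Proposition \ref{cor1} on $L_1$ to build a $\DCM$ $M$ with $L(M) = \pref(L_1)$. Then I would feed $M$ and a machine for $L_2$ into the decision procedure for $\DCM$-inclusion, which answers whether $\pref(L_1) = L(M) \subseteq L_2$. Symmetrically, for the second question I would apply Proposition \ref{cor1} to $L_2$ to obtain a $\DCM$ $M'$ with $L(M') = \pref(L_2)$, and then decide $L_1 \subseteq L(M') = \pref(L_2)$ by the same inclusion procedure.

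Since both steps are effective and the inclusion procedure halts on all $\DCM$ inputs, the overall procedure halts and correctly answers each question, which is all that is required. There is no real obstacle to overcome here; the content of the corollary lies entirely in the two results being assembled, and the proof is simply the observation that their effectiveness composes. If anything required care, it would be checking that the construction underlying Proposition \ref{cor1} is uniform in the input machine, so that the $\DCM$ for $\pref(L_i)$ can actually be handed to the inclusion-decision algorithm; but this uniformity is already implicit in the statement that $\DCM$ is closed under prefix.
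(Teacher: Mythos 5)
Your proof is correct and matches the paper's own argument: the corollary is obtained exactly by applying the effective closure of $\DCM$ under prefix (Proposition~\ref{cor1}, which the paper derives from closure under right quotient with $\NPCM$ languages) and then invoking the decidability of inclusion for $\DCM$ from \cite{Ibarra1978}. Your remark about uniformity of the construction is also consistent with the paper's explicit note that the closure is ``with an effective construction.''
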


\begin{cor}
It is decidable whether a given $\DCM$ language is prefix-dense, and
prefix-marked-dense.
\end{cor}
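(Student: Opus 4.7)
The plan is to reduce both decision problems to the decidability of inclusion for $\DCM$ languages, which is already established in the preceding corollary. The key observation is that Proposition \ref{cor1} gives an effective construction: from a $\DCM$ for $L$, we can produce a $\DCM$ for $\pref(L)$. Once $\pref(L)$ is represented as a $\DCM$ language, both density questions become instances of $\DCM$ inclusion.

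For prefix-density of $L \subseteq \Sigma^*$, I would first construct a $\DCM$ accepting $\pref(L)$ by Proposition \ref{cor1}. Since $\Sigma^*$ is regular, it is accepted by a trivial $\DCM$ (a one-state DFA with no counter reversals). Because $\pref(L) \subseteq \Sigma^*$ always holds, $L$ is prefix-dense if and only if $\Sigma^* \subseteq \pref(L)$, and the latter inclusion between two $\DCM$ languages is decidable by the previous corollary.

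For prefix-marked-density, given $L \subseteq (\Sigma \cup \{\$\})^*$ in $\DCM$, I would again apply Proposition \ref{cor1} to obtain a $\DCM$ for $\pref(L)$, now viewed over the extended alphabet $\Sigma \cup \{\$\}$. The target language $\$ \Sigma^* \$$ is regular, hence in $\DCM$. The definition of prefix-marked-density asks exactly whether $\$ \Sigma^* \$ \subseteq \pref(L)$, and this inclusion is decidable by the previous corollary applied to the two $\DCM$ languages.

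There is no substantive obstacle to overcome: the work has already been done in Proposition \ref{cor1} (closure of $\DCM$ under $\pref$) and in the decidability of $\DCM$ inclusion used in the preceding corollary. The only thing to verify is that the target languages $\Sigma^*$ and $\$ \Sigma^* \$$ are $\DCM$-expressible, which is immediate since both are regular.
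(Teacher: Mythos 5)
Your proposal is correct and follows essentially the same route as the paper: apply the effective closure of $\DCM$ under $\pref$ (Proposition 1), observe that $\Sigma^*$ and $\$\Sigma^*\$$ are regular and hence in $\DCM$, and decide the inclusions $\Sigma^* \subseteq \pref(L)$ and $\$\Sigma^*\$ \subseteq \pref(L)$ using the decidability of inclusion for $\DCM$ languages. The paper treats this as an immediate consequence of the preceding corollary, stated in exactly these terms.
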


This result is essentially the same for $\DPDA$, since it was shown that $\DPDA$ is closed under prefix \cite{GinsburgDPDAs}. Then,
for prefix-density, it suffices to determine if $\Sigma^*$ is equal to the prefix closure. And for prefix-marked-density, it suffices
to determine if $\$\Sigma^*\$$ is equal to the prefix closure intersected with the regular language $\$\Sigma^*\$$ ($\DPDA$ is closed
under intersection with regular languages \cite{GinsburgDPDAs}). And in both cases, the equality problem is decidable for $\DPDA$ \cite{Senizergues}.
\begin{prop}
It is decidable whether a given $\DPDA$ language is prefix-dense, and
prefix-marked-dense.
\end{prop}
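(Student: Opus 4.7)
The plan is to mirror the three-step argument used just above for the $\DCM$ corollary, but with the $\DPDA$-specific tools cited in the paragraph immediately preceding the statement: closure of $\DPDA$ under prefix, closure of $\DPDA$ under intersection with regular languages, and decidability of $\DPDA$ equivalence. In both parts the question reduces to a $\DPDA$-equivalence instance.

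For prefix-density, given a $\DPDA$ $M$ with $L(M) = L \subseteq \Sigma^*$, I would first apply Ginsburg's construction to obtain a $\DPDA$ $M_1$ with $L(M_1) = \pref(L)$. Since $L$ is prefix-dense exactly when $\pref(L) = \Sigma^*$, and since $\Sigma^*$ is regular and thus a $\DPDA$ language, this is a $\DPDA$-equivalence instance, which is decidable by Sénizergues' theorem.

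For prefix-marked-density, given a $\DPDA$ $M$ with $L(M) = L \subseteq (\Sigma\cup\{\$\})^*$, construct $M_1$ as above. Then intersect $L(M_1)$ with the regular language $\$\Sigma^*\$$, using closure of $\DPDA$ under intersection with regular languages, to obtain a $\DPDA$ $M_2$ with $L(M_2) = \pref(L) \cap \$\Sigma^*\$$. Now $\$\Sigma^*\$ \subseteq \pref(L)$ iff $L(M_2) = \$\Sigma^*\$$, another $\DPDA$-equivalence question, again decidable by Sénizergues.

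There is no real obstacle: the proof is essentially an assembly of existing results, and the heavy lifting—decidability of $\DPDA$ equivalence—is imported from the literature. The only care needed is to make sure all three constructions (prefix closure, intersection with a regular set, equivalence check) are effective and apply in the claimed order, which is immediate from the cited closure theorems.
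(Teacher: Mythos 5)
Your proposal is correct and follows exactly the paper's own argument: apply the Ginsburg--Greibach prefix-closure and intersection-with-regular closure results for $\DPDA$, then reduce both questions to $\DPDA$ equivalence, which is decidable by S\'enizergues. No differences worth noting.
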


Next, it is shown in \cite{EIMDeletion2015} that the set of suffixes and infixes of a $\DCM(1,1)$ language is always in $\DCM$ (by sometimes
increasing the number of counters). 
From this, the following is obtained:
\begin{prop}
For $L_1, L_2 \in \DCM(1,1)$,
it is decidable whether $\inf(L_1) \subseteq L_2$ and whether $L_1 \subseteq \inf(L_2)$. It is also decidable whether $\suff(L_1) \subseteq L_2$ and whether $L_1 \subseteq \suff(L_2)$.
\end{prop}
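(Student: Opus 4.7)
The plan is to derive this as a straightforward corollary of two already-cited results: the closure of $\DCM(1,1)$ under infix and suffix (up to enlarging the number of counters), and the decidability of the inclusion problem for $\DCM$. Both ingredients are explicitly referenced in the paragraph preceding the statement.

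First, I would invoke the construction from \cite{EIMDeletion2015} (summarized in the preceding paragraph) to transform the given $\DCM(1,1)$ machine for $L_1$ into a $\DCM$ machine accepting $\infx(L_1)$, and analogously one accepting $\suff(L_1)$; the number of counters may grow, but the resulting machine is still in $\DCM$. Applying the same construction to $L_2$ yields $\DCM$ machines for $\infx(L_2)$ and $\suff(L_2)$. Since $\DCM(1,1) \subseteq \DCM$, the languages $L_1$ and $L_2$ themselves are also in $\DCM$.

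Next, I would reduce each of the four decision questions to a $\DCM$-inclusion query: $\infx(L_1) \subseteq L_2$, $L_1 \subseteq \infx(L_2)$, $\suff(L_1) \subseteq L_2$, and $L_1 \subseteq \suff(L_2)$ all become instances of ``is $A \subseteq B$?'' with $A, B \in \DCM$. The inclusion problem for $\DCM$ is decidable by \cite{Ibarra1978}, so each of the four problems is decidable. I would note that the reductions are effective because the constructions from \cite{EIMDeletion2015} are effective, producing actual $\DCM$ machines rather than mere existence claims.

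There is no real obstacle here beyond invoking the correct previously established results in the right order; the only thing to be slightly careful about is to make sure the closure construction from \cite{EIMDeletion2015} is applied to $L_1$ (on the left of $\subseteq$ for the first and third claims) and to $L_2$ (on the right of $\subseteq$ for the second and fourth claims), and to observe that the overall reversal bound and counter count of the resulting $\DCM$ machine are effectively computable so that the $\DCM$-inclusion algorithm of \cite{Ibarra1978} can actually be run.
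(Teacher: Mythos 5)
Your proposal is correct and matches the paper's own (implicitly given) argument exactly: the paper derives this proposition directly from the closure of $\DCM(1,1)$ under $\suff$ and $\infx$ into $\DCM$ (from the cited deletion-operations paper) combined with the decidability of inclusion for $\DCM$. No differences worth noting.
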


\begin{cor}
It is decidable whether a $\DCM(1,1)$ language is infix-dense, suffix-dense, infix-marked-dense and suffix-marked-dense.
\end{cor}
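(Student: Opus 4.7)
The plan is to reduce each of the four density questions to an inclusion problem already handled by the preceding proposition. Given $L \in \DCM(1,1)$ over alphabet $\Sigma$, the statement ``$L$ is infix-dense'' is by definition ``$\infx(L) = \Sigma^*$'', which is equivalent to ``$\Sigma^* \subseteq \infx(L)$'' since the other inclusion is automatic. Taking $L_1 = \Sigma^*$ (trivially in $\DCM(1,1)$, needing no counters) and $L_2 = L$, this is exactly an instance of ``$L_1 \subseteq \infx(L_2)$'', which is decidable by the previous proposition. The suffix-dense case is identical, replacing $\infx$ by $\suff$.

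For the marked versions, ``$L$ is infix-marked-dense'' means $\$ \Sigma^* \$ \subseteq \infx(L)$, where $L \subseteq (\Sigma \cup \{\$\})^*$. The key observation is that $\$\Sigma^*\$$ is a regular language over $\Sigma \cup \{\$\}$, hence belongs to $\DCM(1,1)$ (again using no counter). Setting $L_1 = \$\Sigma^*\$$ and $L_2 = L$, the property becomes $L_1 \subseteq \infx(L_2)$, which is decidable by the same proposition. The suffix-marked-dense case is handled analogously with $\suff$ in place of $\infx$.

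There is no real obstacle: the work was already done in establishing the proposition (which itself rests on the result of \cite{EIMDeletion2015} that $\suff$ and $\infx$ of $\DCM(1,1)$ languages stay in $\DCM$ by possibly adding counters, combined with decidability of inclusion for $\DCM$). The only point worth making explicit in the write-up is that $\Sigma^*$ and $\$\Sigma^*\$$ are regular, hence trivially fit as the ``$L_1$'' argument, and that the generalization to the marked case requires no further machinery because $\$$ can be treated as just another alphabet symbol in the $\DCM(1,1)$ construction. Thus the corollary follows immediately by four applications of the proposition with appropriate regular choices of $L_1$.
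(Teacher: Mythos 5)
Your proposal is correct and is exactly the argument the paper intends: the corollary follows from the preceding proposition by taking $L_1$ to be the regular (hence $\DCM(1,1)$) language $\Sigma^*$ for density and $\$\Sigma^*\$$ for marked-density, just as the paper does explicitly for the prefix case. No further comment is needed.
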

This result will be improved shortly
using a more general machine class for infix-density, but not
for suffix-density, suffix-marked-density, or infix-marked-density.

Most undecidability proofs in this section use the halting
problem for Turing machines.
Let $U \subseteq\{a\}^*$ be a unary recursively enumerable language that
is not recursive, i.e., not decidable (such a $U$ exists \cite{Minsky}),
and let $Z$ be a deterministic Turing machine accepting $U$.
Assume that $Z$ accepts if and only if $Z$ halts.

Let $Q$ and $\Gamma$ be the state set and worktape alphabet of $Z$,
and $q_0 \in Q$ be the initial state of $Z$. 
Note that $a$ is in $\Gamma$. Let $\Sigma = Q \cup \Gamma \cup \{\#\}$.
Assume without loss of generality that if $Z$ halts, it does so in a 
unique final state $q_f \neq q_0$, and a  unique configuration, and that the initial state $q_0$ is never re-entered
after the initial configuration, and that the length of every halting computation is even.

The halting computation of $Z$
on the input $a^d$ (if it accepts) can be represented by the string 
$x_d = ID_1\#ID_2^R\# 
\cdots \#ID_{k-1}\#ID_k^R$ for some $k \ge 2$,
where $ID_1 = q_0a^d$ and $ID_k$ are the initial and
unique halting configurations of $Z$, and $(ID_1, ID_2,\cdots,~ ID_k)$
is a valid sequence of instantaneous descriptions (IDs, defined in \cite{HU}) of $Z$ on input $a^d$, i.e., 
configuration $ID_{i+1}$ is a valid successor of $ID_i$, and $k$ is even.

Let $d \ge 0$.
Let $T$ be all strings
$w$ of the form $ID_1 \# ID_2^R \#  \cdots \# ID_{k-1} \# ID_k^R$,
where $k \ge 2$, $ID_1 = q_0a^d$, and $ID_k$ is the halting 
configuration of $Z$, and $ID_i$ is any ID
of the Turing machine, $1 < i <k$.
Then $T$ is a regular language, and thus a DFA $M_T$ can be built accepting $T$,
and also one can be built accepting $\overline{T}$.
Let $L_{na}$ be all strings
$w\in T$ of the form $ID_1 \# ID_2^R \#  \cdots \# ID_{k-1}\# ID_k^R$,
where there is an $i$ such that 
$ID_{i+1}$ is not a valid successor of $ID_i$.
Indeed, if $ID_{i+1}$ is not a valid successor of $ID_i$, then this is
detectable by scanning the state of $ID_i$, the letter after the
state (symbol under the read/write head), and from these, the transition
of $Z$ applied to get the valid successor of $ID_i$ can be calculated, as with whether the $ID$ representing the valid successor to $ID_i$ should be shorter or longer by one symbol.
Then, there is some position $j$ of $ID_i$ such that
examining positions $j-2,j-1, j, j+1, j+2$ of
$ID_i$ and $ID_{i+1}$,
and the state of $ID_i$ and $ID_{i+1}$ is enough to imply that $ID_{i+1}$ is
not a valid successor.
Hence, let
$L_{na}(p)$ be the set of words 
$w\in T$ of the form $w=ID_1 \# ID_2^R \#  \cdots \# ID_{k-1} \# ID_k^R$,
where the $p$th character of $w$ is within the string $ID_i$
for some $i$ at position $j$ of $ID_i$ and
examining characters $j-2,j-1,j, j+1,j+2$ of $ID_i$ and $ID_{i+1}$ (if they exist), plus the states of both, and the letter after the state, 
implies that
$ID_{i+1}$ is not a valid successor of $ID_i$.
Thus, $\bigcup_{p\geq 1}L_{na}(p) = L_{na}$.

Let $L_d = L_{na} \cup \overline{T}$. Two lemmas are required for undecidability results.

\begin{lemma}
\label{lemmaNA}
$L_d = \Sigma^*$ if and only if $T\subseteq L_{na}$ if and only if $Z$ does not halt on $a^d$.
\end{lemma}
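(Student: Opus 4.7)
The plan is to prove the lemma in two biconditional steps, both of which reduce essentially to unpacking the definitions of $L_d$, $L_{na}$, and $T$.

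For the first equivalence, $L_d = \Sigma^*$ iff $T \subseteq L_{na}$, I would rely on the observation that by construction $L_{na} \subseteq T$. Given that $L_d = L_{na} \cup \overline{T}$, the equation $L_d = \Sigma^*$ is equivalent to $\Sigma^* \setminus \overline{T} \subseteq L_{na}$, i.e., $T \subseteq L_{na}$. The converse is immediate: if $T \subseteq L_{na}$, then $L_d \supseteq T \cup \overline{T} = \Sigma^*$. This is a direct set-algebra manipulation; no computation structure is used beyond the containment $L_{na} \subseteq T$.

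For the second equivalence, $T \subseteq L_{na}$ iff $Z$ does not halt on $a^d$, I would argue by contraposition in each direction. If $Z$ halts on $a^d$, then by assumption the halting computation is unique, of even length $k \geq 2$, and is represented by the string $x_d = ID_1 \# ID_2^R \# \cdots \# ID_{k-1} \# ID_k^R$ where every $ID_{i+1}$ is a valid successor of $ID_i$. This $x_d$ lies in $T$ (it meets all three syntactic requirements: begins with $q_0 a^d$, ends in the halting configuration, and the internal $ID_i$'s are arbitrary IDs), but it fails the defining condition of $L_{na}$, since no $i$ produces an invalid step. Hence $x_d \in T \setminus L_{na}$, so $T \not\subseteq L_{na}$.

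Conversely, if $Z$ does not halt on $a^d$, then for any $w \in T$ of the form $ID_1 \# ID_2^R \# \cdots \# ID_{k-1} \# ID_k^R$ with $ID_1 = q_0 a^d$ and $ID_k$ the unique halting configuration, $w$ cannot encode a genuine halting computation of $Z$. Therefore there must exist some index $i$ such that $ID_{i+1}$ is not a valid successor of $ID_i$, which puts $w$ in $L_{na}$. Thus $T \subseteq L_{na}$.

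I do not anticipate a real obstacle here: the only point requiring care is making sure the definitions of $T$ and $L_{na}$ genuinely guarantee that (a) a true halting computation $x_d$ of $Z$ on $a^d$ lies in $T$, and (b) any non-halting attempt that nevertheless begins with $q_0 a^d$ and ends with the halting configuration must contain an invalid step detectable by the local window used to define $L_{na}$. Both are given directly by the setup preceding the lemma, so the proof amounts to assembling these facts.
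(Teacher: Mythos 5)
Your proposal is correct and follows essentially the same route as the paper: the first equivalence is the same set-algebra unpacking of $L_d = L_{na} \cup \overline{T}$, and the second rests on the same two observations (a genuine halting computation lies in $T \setminus L_{na}$; if $Z$ does not halt, every word of $T$ must contain an invalid ID transition). The only cosmetic difference is that you chain the equivalences as $(1)\Leftrightarrow(2)$ and $(2)\Leftrightarrow(3)$ while the paper does $(1)\Leftrightarrow(2)$ and $(1)\Leftrightarrow(3)$, and your write-up is somewhat more detailed than the paper's terse argument.
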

\begin{proof}
If $L_d=\Sigma^*$, then $T\subseteq L_{na}$, and if $T\subseteq L_{na}$ then
$T\cup \overline{T} = \Sigma^* \subseteq L_d$. Thus the first two are equivalent.

Assume $L_d = \Sigma^*$. Thus, every sequence of IDs in $T$ is in $L_{na}$, thus there is no sequence of IDs that halts on $a^d$.

Assume that $Z$ does not halt on $a^d$. Let $w \in \Sigma^*$. If $w \notin T$, then $w \in L_d$.
If $w \in T$, then $w$ does not represent an accepting computation, thus,
$w \in L_d$.
\end{proof}

Let $\%$ be a new symbol not in $\Sigma$, and let $\Sigma_{\%} = \Sigma \cup \{\%\}$.
\begin{lemma}
\label{lemmaDCM}
$\bigcup_{p \geq 1} \%^p L_{na}(p)$ and $\bigcup_{p \geq 1} \%^p \$ L_{na}(p) \$$ are both in $\DCM(1,3)$ and $\DCM(2,1)$.
Furthermore, $L_{na},\$L_{na}\$ \in \NCM(1,1)$.
\end{lemma}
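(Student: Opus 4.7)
The plan is to build the $\DCM(1,3)$ machine for $\bigcup_{p\ge 1}\%^p L_{na}(p)$ first, and then adapt the construction for the remaining three languages. The design exploits the alternating encoding $ID_1\#ID_2^R\#\cdots$ of $T$, which is arranged so that from any position $j$ of the $i$-th encoded block, the ``matching'' position in the $(i+1)$-th block can be reached by advancing to the next $\#$ and then travelling the same distance on the other side, up to a finite correction that depends only on the length difference $\delta=|ID_{i+1}|-|ID_i|\in\{-1,0,+1\}$ and the parity of $i$. A single counter doing one up-then-down cycle is therefore enough to align the two $5$-symbol windows needed for the consistency check.

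The $\DCM(1,3)$ construction uses four monotone counter phases. Phase 1 increments the counter on each $\%$, so it holds $p$ when $w$ begins. Phase 2 decrements on each symbol of $w$ while, in finite control, running the DFA $M_T$ to verify $w\in T$, keeping a rolling buffer of the last three symbols, tracking whether the current block is encoded normally or reversed, and recording the block's state symbol and head-letter as soon as they are passed. When the counter first returns to $0$, the head is at some position $j$ of the current $ID_i$; the buffer then holds positions $j-2,j-1,j$, and two further reads complete the $ID_i$-window. Phase 3 increments on each symbol until the following $\#$, so the counter reaches $|ID_i|-j$ and all of $ID_i$ has been scanned, placing in finite control the transition of $Z$ applied at $ID_i$ and hence the value of $\delta$. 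Phase 4 decrements on each symbol of the following block; once the counter returns to $0$, the head is $|ID_i|-j$ symbols into that block, and a constant number of further reads, as a function of $\delta$ and the parity of $i$, lands it exactly on position $j$ of $ID_{i+1}$ and captures the remaining window and state data. The machine then reads to $\lhd$ with the counter idle and $M_T$ still running, accepting iff $M_T$ accepts $w$ and the collected windows plus recorded states and head-letters force $ID_{i+1}$ not to be a valid successor of $ID_i$. Four monotone segments, so three reversals, and the machine is in $\DCM(1,3)$.

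The marked variant $\bigcup_{p\ge 1}\%^p\$L_{na}(p)\$$ is handled by the same construction together with finite-state checks for the two $\$$ markers, and the containment $\DCM(1,3)\subseteq\DCM(2,1)$ noted in the paper transfers both languages into $\DCM(2,1)$. For $L_{na},\$L_{na}\$\in\NCM(1,1)$, the role of $\%^p$ is played by nondeterminism: while reading $w$ and simulating $M_T$, the machine guesses the distinguished position and collects the $ID_i$-window and state data just as before, then performs a single up-then-down counter cycle corresponding to phases 3 and 4 above to align with the $ID_{i+1}$-window and test the transition. Only one reversal is used.

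The step I expect to require the most care is phase 4. Once the counter drops to $0$ after $|ID_i|-j$ symbols past the $\#$, the number of extra symbols needed to reach position $j$ of $ID_{i+1}$ must be computable purely from quantities already in finite control, namely $\delta$ and the parity of $i$, with no residual dependence on $j$, $|ID_i|$, or $|ID_{i+1}|$ individually. A short case analysis over the parity of $i$ and over $\delta\in\{-1,0,+1\}$ shows that the required advance is always $0$, $1$, or $2$, which is what keeps the entire alignment inside a single additional reversal and hence inside $\DCM(1,3)$ and $\NCM(1,1)$ respectively.
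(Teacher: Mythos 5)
Your construction is essentially the paper's own proof: the same four monotone counter phases (count the $\%$'s up, count down to locate the distinguished position inside $ID_i$, count up to the next $\#$, count down into the reversed $ID_{i+1}$ to align the two windows), giving three reversals, with the marked variant handled by finite-state bookkeeping and the $\NCM(1,1)$ case replacing the $\%^p$ prefix by a nondeterministic guess with a single counter reversal. Your explicit case analysis of the $O(1)$ alignment correction via $\delta$ and the parity of $i$ is a detail the paper leaves implicit in the remark that the check ``is possible as $ID_{i+1}^R$ is reversed,'' but it is the same argument.
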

\begin{proof}
We can construct a $\DCM(1,3)$ machine $M_{na}$ to accept the strings of 
$\bigcup_{p \geq 1} \%^p L_{na}(p)$ as follows:
when given $\%^p w $, it reads $\%^p$ and increments 
the counter by $p$.  It then decrements the counter and
verifies that when the counter becomes zero, the input head
is within some $ID_i$ (or $ID_i^R$ if $i$ is even).  
If $i$ is odd, $M_{na}$ then moves the input
head incrementing the counter until it reaches the $\#$
to the right of $ID_i$.
Let $j$ be the value of the counter.
$M_{na}$ then decrements the counter while
moving right on $ID_{i+1}^R$ and after reaching zero, verifying
that  $ID_{i+1}$
is not a valid successor of $ID_i$ (this is possible as $ID_{i+1}^R$ is
 reversed).  Similarly when $i$ is even.
In the same way, we can construct a $\DCM(1,3)$ machine to
accept $\bigcup_{p \geq 1} \%^p \$ L_{na}(p) \$$.
Both languages are in $\DCM(2,1)$ as $\DCM(1,3) \subseteq \DCM(2,1)$.

For $L_{na}$ (and $\$L_{na}\$$), it is possible to nondeterministically
guess the position $p$, and then when within $ID_i$, verify using
the counter once that $ID_{i+1}$ is not a valid successor to $ID_i$. 
 \end{proof}
This is similar to the technique from \cite{Baker1974} to show undecidability of universality for $\NCM(1,1)$.

Most of the undecidability results in this section build off of the above two lemmas, the input $a^d$, the languages $T, L_{na}$, etc.

\begin{prop}
\label{prop:suffSubset}
Let $\Sigma$ be an alphabet.
\begin{enumerate}
\item
It is undecidable to determine, given $L \in \NCM(1,1)$, whether
$L$ is $\varrho$-marked-dense, for $\varrho \in  \{\suff, \inf, \pref \}$.
\item
It is undecidable to determine, given $L \in \DCM(1,3)$, whether
$L$ is $\varrho$-marked-dense, for $\varrho \in \{\suff, \inf\} $.
\item
It is undecidable to determine, given $L \in \DCM(2,1)$, whether
$L$ is $\varrho$-marked-dense, for $\varrho \in \{\suff, \inf \}$.
\end{enumerate}
\end{prop}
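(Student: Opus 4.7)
The plan is to reduce from the halting problem for the fixed deterministic Turing machine $Z$ on input $a^d$, leveraging the machinery introduced just before the statement: Lemma~\ref{lemmaNA} (which gives $L_d = \Sigma^*$ iff $Z$ does not halt on $a^d$) and Lemma~\ref{lemmaDCM} (which supplies the required machines for $\$L_{na}\$$ and its $\%$-tagged refinement). In every case I would engineer a language $L$ whose words all contain exactly two occurrences of $\$$, one at each end (and possibly padded to the left by $\%$'s), so that the marker forces $\varrho(L) \cap \$\Sigma^*\$$ to collapse to $\$L_d\$$; marked-density then becomes equivalent to $L_d = \Sigma^*$, which is undecidable.

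For part (1), I would take
\[
L \;=\; \$L_{na}\$ \;\cup\; \$\overline{T}\$ \;=\; \$L_d\$.
\]
Since $\$L_{na}\$ \in \NCM(1,1)$ by Lemma~\ref{lemmaDCM} and $T$ is regular (hence $\$\overline{T}\$$ is regular), the union lies in $\NCM(1,1)$. Every word of $L$ has the form $\$z\$$ with $z \in L_d \subseteq \Sigma^*$, so because $\$ \notin \Sigma$ the only prefix, suffix, or infix of $\$z\$$ belonging to $\$\Sigma^*\$$ is $\$z\$$ itself. Hence $\varrho(L) \cap \$\Sigma^*\$ = L$ for every $\varrho \in \{\pref, \suff, \inf\}$, and $L$ is $\varrho$-marked-dense iff $L_d = \Sigma^*$ iff $Z$ does not halt on $a^d$.

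For parts (2) and (3), determinism pushes us to the $\%$-tagged variant. I would set
\[
L \;=\; \bigcup_{p \geq 1} \%^p \$L_{na}(p)\$ \;\cup\; \%\$\overline{T}\$ .
\]
The first union is in $\DCM(1,3)$ by Lemma~\ref{lemmaDCM}; the second is regular; and the two pieces are disjoint (the first forces the $x$-portion into $L_{na}(p) \subseteq T$ while the second requires $x \in \overline{T}$), so a straightforward product with a DFA for $T$ running alongside the counter check places $L \in \DCM(1,3) \subseteq \DCM(2,1)$. Each word has the form $\%^p \$x\$$ with $x \in L_d$, and the two $\$$'s sit only at positions $p{+}1$ and at the end, so for $\varrho \in \{\suff, \inf\}$ one obtains $\varrho(L) \cap \$\Sigma^*\$ = \$L_d\$$, and the reduction concludes as in part~(1). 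The main obstacle in all three parts is really this structural bookkeeping --- verifying that $\$$ cannot appear in any ``internal'' infix/suffix position, so that marked-density is genuinely equivalent to $\Sigma^* \subseteq L_d$; everything else (closure of $\NCM$ and $\DCM$ under union with regular languages, and the appeal to Lemmas~\ref{lemmaNA} and~\ref{lemmaDCM}) is routine.
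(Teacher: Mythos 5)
Your part (1) is correct and is exactly the paper's construction: $L' = \$L_{na}\$ \cup \$\overline{T}\$$ with the equivalence to $L_d = \Sigma^*$ via Lemma~\ref{lemmaNA}.

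For parts (2) and (3) there is a genuine gap. Your language $L = \bigcup_{p\geq 1}\%^p\$L_{na}(p)\$ \cup \%\$\overline{T}\$$ contains the symbol $\%$, so $L \subseteq (\Sigma_{\%}\cup\{\$\})^*$ and the marked-density question you must answer is whether $\$\Sigma_{\%}^*\$ \subseteq \varrho(L)$, not whether $\$\Sigma^*\$ \subseteq \varrho(L)$: the definition of $\varrho$-marked-dense quantifies over the whole alphabet of $L$ minus the marker, and $\%$ is part of that alphabet. With $\overline{T}$ taken over $\Sigma^*$ (which is what your identity ``$\varrho(L)\cap\$\Sigma^*\$ = \$L_d\$$'' presupposes), no word $\$w\$$ with $w$ containing a $\%$ ever occurs as a suffix or infix of a word of $L$ --- every occurrence of $\$\cdots\$$ in your words surrounds a word over $\Sigma$ only. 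Hence your $L$ is \emph{never} marked-dense, independently of whether $Z$ halts on $a^d$, and the reduction collapses. The paper's proof avoids this precisely by taking the complement $\overline{T}$ over $\Sigma_{\%}^*$, so that $\$\overline{T}\$$ automatically supplies $\$w\$$ for every $w$ containing at least one $\%$; then $\$\Sigma_{\%}^*\$\subseteq\inf(L')$ reduces to $\$\Sigma^*\$\subseteq\inf(L')$ and the argument goes through. Your structural bookkeeping about the two $\$$'s sitting only at the designated positions is fine, and the membership of $L$ in $\DCM(1,3)\subseteq\DCM(2,1)$ is as you say; the missing idea is solely the treatment of $\%$-containing words inside the markers.
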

\begin{proof}
For part 1, we can accept $L' = \$ L_{na}\$ \cup \$\overline{T} \$ 
\subseteq (\Sigma \cup \{\$\})^*$ in $\NCM(1,1)$ since
$\$\overline{T}\$$ is a regular language (the complement is over $\Sigma^*$).

Then
$\$\Sigma^* \$ \subseteq \inf(L')$ (resp., $\$\Sigma^*\$ \subseteq \suff(L')$,
$\$\Sigma \$ \subseteq \pref(L')$) if and only if $\$\Sigma^* \$ = L'$
if and only if $L_d = \Sigma^*$, which we
already know is true
if and only if $Z$ does not halt on $a^d$ by Lemma \ref{lemmaNA}, which is undecidable.

For parts 2 and 3, we instead use
$L' =  \bigcup_{p \geq 1} \%^p \$ L_{na}(p) \$ \cup \$\overline{ T }\$$,
the complement $\overline{T}$ is over $\Sigma_{\%}^* = (\Sigma \cup \{\%\})^*$ here, so it will also contain
any word with $\%$ in it to allow for marked-density to be with
$L' \subseteq (\Sigma_{\%} \cup \{\$\})^*$ where the goal is to decide
whether $\$\Sigma_{\%}^*\$ \subseteq \inf(L')$. Then $L'$
is in $\DCM(1,3) \cap \DCM(2,1)$ by Lemma \ref{lemmaDCM} and since $\DCM(k,l)$ is closed under union with regular languages, for every $k,l$
\cite{Ibarra1978}. And
$\$\Sigma_{\%}^*\$ \subseteq \inf(L')$ if and only if 
$\$\Sigma^* \$ \subseteq \infx(L')$ (since $\overline{T}$ contains all 
words with at least one $\%$) if and only if  $L_d = \Sigma^*$. The proof in
the case of the suffix operation is similar.
 \end{proof}

The proof for the outfix operation is similar.
\begin{prop} \label{prop2}
It is undecidable, given $L \in \NCM(1,1)$, whether
$L$ is $\outf$-marked-dense. Similarly with $L\in \DCM(2,1)$,
and $L \in \DCM(1,3)$.
\end{prop}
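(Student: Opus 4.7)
The plan is to adapt Proposition~\ref{prop:suffSubset} to the outfix operation, working around the fact that outfix can delete an arbitrary middle infix. In particular, the naive $L' = \$L_d\$$ fails because $\outf(\$L_d\$) \cap \$\Sigma^*\$ = \$\outf(L_d)\$$, and $\outf(L_d) = \Sigma^*$ always (for any $v \in \Sigma^*$, $v\#\# \in \overline{T}$, so $v \in \outf(\overline{T})$). I will append a trailing symbol outside the $\$$-brackets, which forces any outfix decomposition witnessing $\$v\$$ to take $y = \lambda$, so that $u = x$ is a prefix ending in $\$$.

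For part 1, let $L' = \{\$\alpha\$\# : \alpha \in L_d\}$ with $\# \in \Sigma$. Since $L_d \in \NCM(1,1)$ and $\NCM(1,1)$ is closed under concatenation with regular languages, $L' \in \NCM(1,1)$. If $\$v\$ = xy$ and $xwy = \$\alpha\$\# \in L'$, then nonempty $y$ would force $u$ to end in $\#$, contradicting $u = \$v\$$; so $y = \lambda$ and $u = x$ is a prefix of $\$\alpha\$\#$ ending in $\$$. The only such prefix of length at least two is $\$\alpha\$$, giving $v = \alpha \in L_d$. Conversely, every $\$\alpha\$$ with $\alpha \in L_d$ is witnessed by $x = \$\alpha\$$, $w = \#$, $y = \lambda$. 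Hence $\outf(L') \cap \$\Sigma^*\$ = \$L_d\$$, and $L'$ is outfix-marked-dense iff $L_d = \Sigma^*$ iff $Z$ does not halt on $a^d$, undecidable by Lemma~\ref{lemmaNA}.

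For parts 2 and 3, let $L' = \bigcup_{p \geq 1} \%^p \$ L_{na}(p)\$ \ \cup\ \$\overline{T}\$\#$, where $\overline{T}$ is the complement of $T$ taken over $\Sigma_\% = \Sigma \cup \{\%\}$. By Lemma~\ref{lemmaDCM} the first union component lies in $\DCM(1,3) \cap \DCM(2,1)$, and union with the regular language $\$\overline{T}\$\#$ preserves these classes. In the outfix analysis, any word $\%^p\$\alpha\$$ from the first component admits $u \in \$\Sigma_\%^*\$$ only with $x = \lambda$ (else $u$ begins with $\%$, not $\$$), and then the only suffix $y$ yielding $|u| \geq 2$ is $\$\alpha\$$, contributing $v = \alpha \in L_{na}(p) \subseteq L_{na}$. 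A word $\$\beta\$\#$ from the second component forces $y = \lambda$ as in part 1, contributing $v = \beta \in \overline{T}$. Hence $\outf(L') \cap \$\Sigma_\%^*\$ = \$(L_{na} \cup \overline{T})\$$; since every $v$ containing $\%$ lies automatically in $\overline{T}$ (as $T \subseteq \Sigma^*$), marked-density is equivalent to $\Sigma^* \subseteq L_d$, hence to $L_d = \Sigma^*$, again undecidable.

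The main step requiring care is the exhaustive outfix case analysis for the first union component in the $\DCM$ construction: one must rule out every nontrivial split $x \neq \lambda, y \neq \lambda$ in which $x$ and $y$ together carry the two $\$$-markers of $\%^p\$\alpha\$$. Exploiting $p \geq 1$ and that the word begins at position $0$ with $\%$ immediately forces $x = \lambda$, leaving only the intended case $y = \$\alpha\$$. Everything else is routine bookkeeping.
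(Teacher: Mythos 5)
Your proof is correct, and it rests on the same basic mechanism as the paper's: place a marker \emph{outside} the $\$\cdots\$$ block so that one side of every outfix split landing in $\$\Sigma^*\$$ is forced to be empty, collapsing outfix to a one-sided deletion. The paper does this by prepending $\%$ (taking $L' = \%\$L_{na}\$ \cup \%\$\overline{T}\$$ in part 1), which forces $x=\lambda$; you append $\#\in\Sigma$ after the closing $\$$, which forces $y=\lambda$ --- a mirror image of the same idea, and your preliminary observation that the unadorned $\$L_d\$$ fails (because $\outf(L_d)=\Sigma^*$ always) is exactly the right motivation. Where your write-up genuinely earns its keep is in parts 2 and 3: the paper merely asserts that the language $\bigcup_{p\geq 1}\%^p\$L_{na}(p)\$\cup\$\overline{T}\$$ from Proposition~\ref{prop:suffSubset} ``also works'' for outfix, but as literally written the second component is a problem, since for any $v\in\Sigma_{\%}^*$ the word $\$v\%\$$ lies in $\$\overline{T}\$$ and yields $\$v\$$ as an outfix, so that language is outfix-marked-dense unconditionally; the intended reading must be that the $\%$-prefix modification of part 1 is applied there too. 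Your trailing $\#$ on exactly that component repairs the same defect by the dual route, and your point that the first component needs no modification because its $\%^p$ prefix already forces $x=\lambda$ is correct. The closure properties you invoke ($\NCM(1,1)$ and $\DCM(k,l)$ under union with, and marked concatenation by, the relevant regular pieces) are standard and consistent with the paper's own usage, so the argument goes through.
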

\begin{proof}
For $L \in \NCM(1,1)$, we modify the language $L'$ in the proof
of Part 1 of Proposition \ref{prop:suffSubset}. 
So $L' = \% \$ L_{na}\$ \cup \% \$\overline{T}\$$ ($\overline{T}$ over
$\Sigma_{\%}^*$).
For the
other classes, $L'$ in the proofs of parts 2, 3
also work for $\outf$.
 \end{proof} 


It follows from Propositions \ref{prop:suffSubset} and \ref{prop2} that
$\DPDA(3)$ has an undecidable
$\varrho$-marked-density problem for suffix, infix, and outfix.
The following shows that they are also undecidable
for $\DPDA(1)$.
\begin{prop}
For $\varrho \in \{\suff, \inf,\outf\}$, it is undecidable given $L \in \DPDA(1)$, whether $L$ is $\varrho$-marked-dense.
\end{prop}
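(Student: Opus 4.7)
The plan is to reduce from the halting problem of a deterministic Turing machine $Z$ on input $a^d$, mirroring the construction of Propositions \ref{prop:suffSubset} and \ref{prop2}. Given $(Z,a^d)$, I will construct $L\in\DPDA(1)$ such that $L$ is $\varrho$-marked-dense iff $Z$ does not halt on $a^d$. The enabling observation is that a $\DPDA(1)$ can decide, on input $ID\#ID'^R$, whether $(ID,ID')$ is an invalid pair: push $ID$ during the push phase, switch to popping on $\#$, and compare each popped symbol with the current input symbol of $ID'^R$, aligning $ID$ and $ID'$ from the right; the finite state tracks the head position of each ID and the local transition rule of $Z$, so a mismatch at a non-head tape position or a violation at the head is detected inside the single push--pop cycle.

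Using this, I let $L=L_1\cup \%\$\overline{T}\$$, where $L_1$ consists of strings of the form $\%\$\, w_1\, @_1\, ID\#ID'^R\, @_2\, w_2\, \$$ such that $w_1\cdot ID\#ID'^R\cdot w_2$ (with the $@$'s dropped) lies in $T$, with $ID_1=q_0a^d$, with $ID_k$ halting, with $(ID,ID')$ appearing as the pair at the marked position, and with $(ID,ID')$ invalid. The $\DPDA(1)$ reads $\%\$ w_1$ in finite state (checking structure and endpoints), begins pushing on $@_1$, performs the invalidity verification above while popping through $\#ID'^R$, empties the stack on $@_2$, and parses $w_2\$$ in finite state; $\%\$\overline{T}\$$ is regular and added by union. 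Applying the $\varrho$-marked-density analysis of Propositions \ref{prop:suffSubset} and \ref{prop2}: for $\outf$, the leading $\%$ forces any $\$u\$\in\outf(L)$ (with $u\in\Sigma_\%^*$) to arise by stripping the $\%$ and removing the witness middle $@_1 ID\#ID'^R @_2$, giving $u=w_1\cdot ID\#ID'^R\cdot w_2\in L_{na}$ or else $u\in\overline{T}$; small adjustments to $L$ (changing the leading marker to a trailing marker for $\suff$, and dropping the $\%$ for $\inf$ as in part~1 of Proposition \ref{prop:suffSubset}) handle the remaining operations. Hence $\$\Sigma_\%^*\$\subseteq\varrho(L)$ iff $L_{na}\cup\overline{T}=\Sigma_\%^*$, which by Lemma \ref{lemmaNA} is equivalent to $Z$ not halting on $a^d$.

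The main obstacle is enforcing that the witness $(ID,ID')$ in an $L_1$-string genuinely coincides with the pair at the marked position of the underlying computation, rather than being an unrelated invalid pair spliced in; otherwise $\outf(L)$ would already contain $\%\$u\$$ for every structurally valid $u$ (by inserting a spurious invalid pair between two adjacent IDs of $u$), including for valid halting computations, which would defeat the reduction. I plan to enforce the tie inside the single push--pop cycle by arranging the $\DPDA(1)$ to push the tail of $w_1$ (concretely, the characters of the last ID immediately before $@_1$) and to devote the initial portion of the pop phase to a character-by-character match of this pushed tail against the witness $ID$ read from the input; only once that equality is confirmed does popping continue into $\#ID'^R$ to run the invalidity check described above. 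A symmetric trick on the $w_2$-side, together with the placement of $@_2$, forces $ID'^R$ to agree with the first ID of $w_2$. Because all of this takes place inside a single push--pop, the automaton remains in $\DPDA(1)$, and the reduction goes through uniformly for $\suff$, $\inf$, and $\outf$.
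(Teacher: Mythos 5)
Your reduction does not go through, for two structural reasons. First, the deletion operations cannot excise the bookkeeping you insert. Your $L_1$-strings have the shape $\%\$\,w_1\,@_1\,ID\#ID'^R\,@_2\,w_2\,\$$, and to witness $\$u\$\in\varrho(L)$ with $u=w_1\cdot ID\#ID'^R\cdot w_2$ you would have to delete the leading $\%$ \emph{and} the interior block $@_1\cdots@_2$ (or at least the two markers $@_1,@_2$ separately). Outfix deletes a single contiguous factor, infix deletes one prefix and one suffix, and suffix deletes one prefix; none of these can remove two or three pairwise disjoint interior pieces, and if the deleted factor is the whole block $@_1 ID\#ID'^R@_2$ then the invalid pair itself disappears from $u$. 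So the only words of the form $\$v\$$ contributed by $\varrho(L_1)$ still contain $@$-symbols, and the inclusion $\$\Sigma_{\%}^*\$\subseteq\varrho(L)$ is never tied to $T\subseteq L_{na}$. Second, your fix for the obstacle you correctly identify overdraws the single stack reversal: to match the pushed tail of $w_1$ against the witness $ID$, the witness must be consumed from the input during the pop phase, after which it is no longer on the stack and cannot be compared against $ID'^R$; re-pushing it is a second reversal. There is also an orientation problem (one push--pop compares a pushed factor against the \emph{reversal} of an input factor, not against a same-orientation copy), and a determinism problem (the machine cannot know which ID is the last one before $@_1$ until it has already read past it).

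The paper avoids the Turing-machine machinery entirely for $\DPDA(1)$: it reduces from the undecidability of emptiness of intersection of two $\DPDA(1)$ languages \cite{Baker1974}, using closure of $\DPDA(1)$ under complementation, and takes $L'=\%\$\overline{L_1}\$\cup\$\overline{L_2}\$\cup\$\Sigma_{\%}^*\%\Sigma_{\%}^*\$$, so that $\$\Sigma_{\%}^*\$\subseteq\varrho(L')$ iff $\$\Sigma^*\$\subseteq\$\overline{L_1}\$\cup\$\overline{L_2}\$$ iff $L_1\cap L_2=\emptyset$. The difficulty you ran into is exactly why the paper's halting-problem constructions need either the $\%^p$ position-encoding prefix with three counter reversals ($\DCM(1,3)$) or an unrestricted deterministic stack ($\DPDA$), neither of which fits inside a one-reversal $\DPDA$; switching to a different undecidable source problem is what makes the $\DPDA(1)$ case work.
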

\begin{proof} The problem of whether the
intersection of two $\DPDA(1)$ languages is empty
is undecidable \cite{Baker1974}.
Let $L_1, L_2 \in \DPDA(1)$. Then 
$L_1 \cap L_2 = \emptyset$ if and only if
$\overline{L_1 \cap L_2} = \Sigma^*$ if and only if
$\overline{L_1} \cup \overline{L_2} = \Sigma^*$ if and only if
$\$\Sigma^* \$ \subseteq \$ \overline{L_1}\$ \cup \$\overline{L_2}\$$.

Let $L' = \% \$ \overline{L_1} \$ \cup \$ \overline{L_2} \$ \cup \$ \Sigma_{\%}^* \% \Sigma_{\%}^* \$$ (here, the complements are over $\Sigma^*$).
Note that $L' \subseteq (\Sigma_{\%} \cup \{\$\})^*
= (\Sigma \cup \{\%,\$\})^*$.
$L'$ is in $\DPDA(1)$ since $\DPDA(1)$ is closed under complement, the union of the first two sets is a $\DPDA(1)$ language (if \% is the first letter then simulate the first set, otherwise simulate the second), and the third one is regular and $\DPDA(1)$ is closed under union with regular sets.

Then
$\$\Sigma_{\%}^* \$ \subseteq \inf(L')$ if and only if
$\$\Sigma_{\%}^* \$ \subseteq \$\overline{L_1}\$ \cup \$\overline{L_2}\$ \cup \$\Sigma_{\%}^* \% \Sigma_{\%}^*\$$ if and only if
$\$\Sigma^*\$ \subseteq \$ \overline{L_1} \$ \cup \$ \overline{L_2} \$$, which we know is undecidable. The proof is identical for suffix,
as with outfix after preceding each word in $L'$ by an additional $\%$.
 \end{proof}

Next, $\varrho$-density instead of $\varrho$-marked-density will be considered; specifically, the question of whether it is decidable
to determine if a language $L$
is $\varrho$-dense ($\varrho(L) = \Sigma^*$) for various operations and languages. For suffix-density,
undecidability occurs for the same families as
for marked-suffix-density. The proofs will again build on the Turing Machine
$Z$, input $a^d$, and languages $L_{na},T$, etc.

\begin{prop} \label{prop4}
Let $L \in \DCM(1,3)$. It is undecidable to determine if $L$ is suffix-dense.
Similarly for $L \in \DCM(2,1)$ and $L \in \NCM(1,1)$.
\label{suffixdense}
\end{prop}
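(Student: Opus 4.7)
The plan is to reduce the (undecidable) non-halting of $Z$ on $a^d$ to suffix-density in each of the three families, paralleling the construction used in the marked case (Proposition~\ref{prop:suffSubset}) but arranging the markers so that $\suff(L) = \Sigma'^*$ captures $L_d = \Sigma^*$ rather than $\$\Sigma^*\$ \subseteq \suff(L)$.

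For $L \in \NCM(1,1)$, I would introduce a fresh separator $\& \notin \Sigma$ and set $L = \Sigma_\&^* \, \& \, L_d$ where $L_d = L_{na} \cup \overline{T}$ and $\Sigma_\& = \Sigma \cup \{\&\}$. This is in $\NCM(1,1)$: the machine reads the $\Sigma_\&^*$ prefix with no counter use, nondeterministically guesses which $\&$ is the separator, and then simulates the $\NCM(1,1)$ machine for $L_d$ on the remainder. A direct computation gives $\suff(L) = \Sigma_\&^* \, \& \, L_d \cup \suff(L_d)$. The key observation is that $\suff(\overline{T}) = \Sigma^*$ unconditionally, because for any $w \in \Sigma^*$ the word $\#w$ does not begin with $q_0 a^d$ and so lies in $\overline{T}$; consequently every $\&$-free word already lies in $\suff(L)$. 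For a word $w$ containing $\&$, decompose it at its last $\&$ as $w = u \& v$ with $v \in \Sigma^*$: then $w \in \Sigma_\&^* \& L_d$ precisely when $v \in L_d$. Hence $\suff(L) = \Sigma_\&^*$ iff $L_d = \Sigma^*$, which by Lemma~\ref{lemmaNA} is equivalent to $Z$ not halting on $a^d$, so this property is undecidable.

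For $L \in \DCM(1,3)$ (and hence $L \in \DCM(2,1)$ via the standard containment $\DCM(1,3) \subseteq \DCM(2,1)$), the plan is to replace the nondeterministic ingredients by deterministic ones. I would use the $\DCM(1,3)$ language $L'' = \bigcup_{p \geq 1} \%^p L_{na}(p)$ supplied by Lemma~\ref{lemmaDCM} in place of $L_{na}$, together with $\overline{T_\%}$ (the complement of $T$ in $\Sigma_\%^*$, which is regular and hence absorbable into $\DCM(1,3)$), and I would use a deterministic prefix construction in which the separator is unambiguously the first occurrence of a fresh symbol. The $\%^p$ block at the front of each word in $L''$ lets the deterministic machine recover the witness position $p$ before simulating the recognizer for $L_{na}(p)$, while the suffix operation strips the $\%^p$ to recover $L_{na}$ itself; combined with the padding from $\overline{T_\%}$, a suffix analysis analogous to the $\NCM(1,1)$ case reduces suffix-density to $T \subseteq L_{na}$, which by Lemma~\ref{lemmaNA} is again the non-halting of $Z$.

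The main obstacle will be controlling the interaction between the separator and the suffix operation in the deterministic setting: unlike the nondeterministic $\NCM(1,1)$ construction, where the prefix $\Sigma_\&^*$ can freely contain the separator, the deterministic version must restrict the shape of the prefix so that the separator is unambiguous, yet still arrange that (i) every non-critical word is in $\suff(L)$ unconditionally and (ii) the critical words are in $\suff(L)$ precisely when the halting encoding succeeds, without the regular padding accidentally producing suffixes that trivialise the halting check.
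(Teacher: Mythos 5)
Your $\NCM(1,1)$ construction is correct and is essentially the paper's argument in a lightly disguised form: the paper uses $L_1'=\{ux \mid u\in\Sigma^*, x\in L_{na}\}$ together with $L_2'=\overline{\Sigma^* T}$ and no fresh symbol, whereas you use an explicit separator $\&$; both reductions bottom out in $L_d=\Sigma^*$ iff $Z$ does not halt, and the observation that $\suff(\overline{T})=\Sigma^*$ unconditionally is exactly the role played by $L_2'$ in the paper. The genuine problem is the deterministic case, where your sketch does not go through and the paper's key idea is missing. If the separator is required to be the \emph{first} occurrence of $\&$ (so that a $\DCM$ can find it), then every word of $L$ contains exactly one $\&$, and no word with two or more $\&$'s is ever a suffix of $L$, so $L$ is never suffix-dense. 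If instead you allow $\&$ in the padding prefix, the relevant $\&$ is the \emph{last} one, which a one-way deterministic machine cannot identify. And you cannot repair this by throwing all multi-$\&$ words into a regular junk component: $\suff(L)$ is suffix-closed, so if $\&u\&v$ is unconditionally in $L$ (hence in $\suff(L)$), then its suffix $u\&v$ is unconditionally in $\suff(L)$ as well, and the reduction collapses to ``always dense.'' Any word having a critical word as a suffix must itself remain conditional, which is precisely what a fresh-separator scheme cannot arrange.

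The paper's resolution is to use no separator at all and instead exploit the intrinsic structure of $T$: a word of $T$ begins with the state $q_0$, which is never re-entered, so the start of the critical block is detectable by the input itself, and the deterministic machine restarts the DFA $M_T$ at each occurrence of $q_0$. The remaining difficulty --- that the machine must locate the error position $p'$ inside $x$ without knowing in advance where the padding $u$ ends --- is handled by defining $L_1'=\{\%^p ux \mid u\in\Sigma\Sigma_{\%}^*,\ p=|u|+p',\ x\in L_{na}(p')\}$, so the single unary $\%$-prefix encodes the \emph{sum} $|u|+p'$; the counter counts down across $u$ and then continues into $x$, reaching zero exactly at the witness position. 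The suffix operation then strips $\%^p u$, leaving exactly the words of $T$ as the conditional ones. Your proposal names the right ingredients ($\%^p L_{na}(p)$, $\overline{T_{\%}}$, padding) but the step ``a suffix analysis analogous to the $\NCM(1,1)$ case'' is where the argument actually lives, and as described it fails; you need the $p=|u|+p'$ encoding (or an equivalent device) to make the deterministic machine work, and you need to drop the fresh separator entirely.
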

\begin{proof}
Let $L_1' = \{ \%^p ux \mid u \in \Sigma \Sigma_{\%}^*, p = |u| + p', x \in L_{na}(p')\}$,
$L_2' = \overline{\Sigma_{\%}^* T}$ (over $\Sigma_{\%}^*$),  
and $L' = L_1' \cup L_2'$.
Then $L_1' \in \DCM(1,3)$ as one can build $M_1' \in \DCM(1,3)$ by
adding $p$ to the counter until hitting a letter that is not $\%$. 
Then as $M_1'$ reads the remaining input in $\Sigma \Sigma_{\%}^*$, for every character read, it decreases the counter, and each time $M_1'$ hits state $q_0$ (which could be the beginning of a word in $T$), it runs
$M_T$ (the DFA accepting $T$) in parallel to check if the suffix starting at this position is in $T$. If it hits $q_0$ more than once, it can stop previous simulations of $M_T$ and start a new simulation.
However, it is only required that a suffix of the input is in $T$. If the counter empties while $M_T$ is running in parallel, then let
$ux$ be the input, where $u$ is the input before
reaching $q_0$ in the current run of $M_T$, and $x$ be the input from
$q_0$ to the end. Then $M_1'$ tries to verify that $x \in L_{na}(p')$,
where $p = |u| + p'$. When the counter reaches
$0$, $M_1'$ has subtracted $1$ from the counter the length of $u$ plus
$p - |u| = p'$ times. Thus, $M_1'$ can continue the
simulation of $M_{na}$ from Lemma \ref{lemmaDCM} from when
the counter reaches $0$, thereby verifying that
$x \in L_{na}(p')$ (and $x \in T$). Then $L' \in \DCM(1,3)$
as each $\DCM(k,l)$ is closed under union with regular languages
\cite{Ibarra1978}.
Then also $L'$ must be in $\DCM(2,1)$.

It will be shown that $\suff(L') = \Sigma_{\%}^*$ if and only if $T \subseteq L_{na}$, which
is enough by Lemma \ref{lemmaNA}.

``$\Leftarrow$''
Assume $T\subseteq L_{na}$.
Let $w \in \Sigma_{\%}^*$. 

Assume that there exists a (potentially not proper) suffix of $w$ in $T$. Then $w = ux, x\in T, u \in \Sigma_{\%}^*$. 
Then $x\in L_{na}$, by assumption. 
Then there exists $p$ such that $\%^p x \in L_{na}(p), x \in T$ and so 
$\%^{p'} a u  x \in L_1', au \in \Sigma \Sigma_{\%}^*$, where $p'= p+|au|$. Thus $ux=w \in \suff(L_1')$. 

Assume that there does not exist a suffix of $w$ in $T$. Then $w \in L_2'$, and $w \in \suff(L')$.

``$\Rightarrow$''
Assume $\suff(L') = \Sigma_{\%}^*$. 
Let $w \in T$. Then $w \in \suff(L')$. 
Then there exists $\%^p u w \in L_1'$. This implies there exists $p'$ such that $w \in L_{na}(p') \subseteq L_{na}$.

The case for $\NCM(1,1)$ is similar except using 
$L_1' = \{ux \mid x \in L_{na}, u \in \Sigma^* \}$ 
and $L_2' = \overline{\Sigma^* T}$,
and $L' = L_1' \cup L_2' \subseteq \Sigma^*$, as $u$ can be nondeterministically
guessed without using the counter.
 \end{proof}

\begin{cor} \label{cor5}
For $L \in \NCM(1,1)$, the question of whether $L$ is prefix-dense is undecidable.
\end{cor}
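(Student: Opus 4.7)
The plan is to mirror the $\NCM(1,1)$ construction from the proof of Proposition \ref{prop4} with the suffix operation replaced by prefix throughout. Let $L_1' = L_{na}\Sigma^* = \{xu \mid x \in L_{na}, u \in \Sigma^*\}$ and let $L_2' = \overline{T\Sigma^*}$ be the set of words in $\Sigma^*$ having no prefix in $T$; set $L' = L_1' \cup L_2'$. Since $L_{na} \in \NCM(1,1)$ by Lemma \ref{lemmaDCM}, an $\NCM(1,1)$ for $L_1'$ can simulate the machine for $L_{na}$ on some nondeterministically chosen prefix of the input and then read the remaining symbols without touching the counter. Because $L_2'$ is regular and $\NCM(1,1)$ is closed under union with regular languages, $L' \in \NCM(1,1)$.

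I claim $\pref(L') = \Sigma^*$ if and only if $T \subseteq L_{na}$, which by Lemma \ref{lemmaNA} is equivalent to $Z$ not halting on $a^d$ and therefore undecidable. The easy direction: if $T \subseteq L_{na}$ and $w \in \Sigma^*$, then either $w$ has some prefix in $T \subseteq L_{na}$ (so $w \in L_1'$) or $w$ has no prefix in $T$ (so $w \in L_2'$); either way $w \in L' \subseteq \pref(L')$.

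For the converse, take $w \in T$ and obtain $y$ with $wy \in L'$. The word $wy$ cannot lie in $L_2'$ because $w$ itself is a prefix of $wy$ in $T$, so $wy \in L_1'$ and $wy = xu$ with $x \in L_{na} \subseteq T$. The main obstacle is to show that these two prefixes of $wy$ in $T$ coincide, so that $w = x \in L_{na}$; this is the prefix-analogue of the uniqueness step used in Proposition \ref{prop4}, where $q_0$ never being re-entered forced the $L_{na}$-suffix to equal $w$. On the prefix side I would rely on the symmetric property that $q_f$ occurs exactly once and only inside the final ID of every word of $T$; if not already guaranteed by the WLOG assumptions in the excerpt, it can be imposed either by further assuming that $Z$ enters $q_f$ only on its last transition, or by intersecting $T$ with the regular language forbidding $q_f$ anywhere before the last ID, both of which preserve the statements of Lemmas \ref{lemmaNA} and \ref{lemmaDCM}. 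Then the unique $q_f$ in $x$ and the unique $q_f$ in $w$ must be the same symbol of $wy$, forcing $|x| = |w|$ and hence $x = w$, which completes the reduction.
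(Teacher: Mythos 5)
Your construction is correct in substance, but it takes a genuinely different route from the paper. The paper's proof of Corollary \ref{cor5} is a two-line reduction: $\NCM(k,l)$ is effectively closed under reversal, and $\pref(L^R)=\Sigma^*$ if and only if $\suff(L)=\Sigma^*$, so undecidability of prefix-density for $\NCM(1,1)$ falls out immediately from Proposition \ref{prop4}. You instead redo the whole reduction on the prefix side, with $L_1'=L_{na}\Sigma^*$ and $L_2'=\overline{T\Sigma^*}$; this works, and to your credit you isolate the one step that does not transfer for free, namely that the two prefixes of $wy$ lying in $T$ must coincide. As defined, $T$ is \emph{not} prefix-free, since the intermediate $ID_i$ ($1<i<k$) are arbitrary IDs and may themselves equal the halting configuration, so a word of $T$ can be a proper prefix of a longer one; your patch of intersecting $T$ (and hence $L_{na}$) with the regular set forbidding $q_f$ outside the last segment does make $T$ prefix-free and preserves Lemmas \ref{lemmaNA} and \ref{lemmaDCM}, since the genuine halting computation of the deterministic machine $Z$ contains $q_f$ only in its final ID. One caveat: your first proposed alternative --- merely assuming $Z$ enters $q_f$ only on its last transition --- does not suffice on its own, because the offending occurrences of $q_f$ live in arbitrary intermediate IDs of $T$, not in actual computations of $Z$; it is the restriction on $T$ itself that does the work. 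The trade-off between the two approaches: the reversal argument is shorter and sidesteps the prefix-freeness issue entirely (the corresponding uniqueness step on the suffix side is already absorbed into Proposition \ref{prop4}), whereas your direct construction is self-contained and makes explicit which structural property of $T$ the reduction actually relies on.
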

\begin{proof}
It is known that $\NCM(k,l)$ is closed under reversal for each $k,l$. 
Also, $\pref(L^R) = \Sigma^*$ if and only if $\suff(L) = \Sigma^*$.
 \end{proof}

We are able to extend the undecidability results to infix-density, but only by using one unrestricted counter and with nondeterminism.

\begin{prop}
Let $L\in \NCA$. The question of whether $L$ is infix-dense is undecidable.
\end{prop}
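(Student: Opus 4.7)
The plan is to reduce from the halting problem for $Z$ on input $a^d$, leveraging the machinery already built up. By Lemma~\ref{lemmaNA}, $L_d = L_{na} \cup \overline{T}$ satisfies $L_d = \Sigma^*$ iff $Z$ does not halt on $a^d$, and by Lemma~\ref{lemmaDCM} combined with closure of $\NCM(1,1)$ under union with regular languages, $L_d \in \NCM(1,1) \subseteq \NCA$.

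Let $\$ \notin \Sigma$ be a fresh marker and set $\Sigma' = \Sigma \cup \{\$\}$. I would take as the witness language
\[
L = \$ (L_d \$)^* \subseteq (\Sigma')^*,
\]
that is, the set of all words $\$ w_1 \$ w_2 \$ \cdots \$ w_n \$$ with $n \ge 0$ and each $w_i \in L_d$. To see $L \in \NCA$, I would build a one-counter machine that consumes the leading $\$$, then iterates: simulate the $\NCM(1,1)$ for $L_d$ on the next $\Sigma^*$-block, use $\epsilon$-transitions to decrement the counter back to $0$, then read the trailing $\$$; a state reached just after consuming some $\$$ is accepting. The inter-block reset inflates the reversal count, but this is permitted since $\NCA$ imposes no reversal bound.

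The heart of the argument is the equivalence $\inf(L) = (\Sigma')^*$ iff $L_d = \Sigma^*$. For the backward direction, if $L_d = \Sigma^*$, then for any $v \in (\Sigma')^*$, writing $v = u_0 \$ u_1 \$ \cdots \$ u_k$ with each $u_i \in \Sigma^*$ gives $\$ v \$ = \$ u_0 \$ u_1 \$ \cdots \$ u_k \$ \in \$ (L_d \$)^{k+1} \subseteq L$, so every $z \in (\Sigma')^*$ is an infix of $\$ z \$ \in L$. For the forward direction, I would argue by contrapositive: if $v \in \Sigma^* \setminus L_d$, then $\$ v \$$ is not an infix of any element of $L$, because any occurrence of $\$ v \$$ as a substring of some $\$ w_1 \$ \cdots \$ w_n \$ \in L$ has $v \in \Sigma^*$ strictly between two $\$$'s with no intermediate $\$$, forcing those two $\$$'s to be consecutive separators and hence $v = w_i \in L_d$ for some $i$, a contradiction. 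I expect the main obstacle to be the $\NCA$ construction for $L$: synchronizing the end of each inner block simulation with the reading of its trailing $\$$ while using $\epsilon$-decrements to reset the counter is routine but fiddly, and must handle the degenerate case $n = 0$ (the single word $\$$) as well as arbitrary block counts. Once these pieces are in hand, undecidability of $L_d = \Sigma^*$ transfers directly to undecidability of infix-density for $\NCA$.
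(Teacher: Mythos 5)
Your proof is correct, but it takes a genuinely different route from the paper's. The paper stays over the original alphabet $\Sigma$ and uses the witness language $L' = \overline{(\Sigma^* T \Sigma^*)}\,(L_{na}\,\overline{(\Sigma^* T \Sigma^*)})^*$, with the words of $\overline{\Sigma^* T \Sigma^*}$ acting as implicit separators; its forward direction then needs a structural argument about $T$ (the initial and final states of $Z$ occur only at the beginning and end of a word of $T$) to conclude that an occurrence of some $w \in T$ as an infix of a word of $L'$ must coincide with exactly one $L_{na}$-factor. Your explicit marker $\$$ makes that alignment step trivial --- the two $\$$'s bounding $v$ must be consecutive separators, so $v$ is a whole block --- at the modest cost of enlarging the alphabet to $\Sigma \cup \{\$\}$, which is harmless because density is measured over the alphabet of the language (the paper itself does this elsewhere, e.g.\ with $\Sigma_{\%}$). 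Both arguments are reductions from non-halting of $Z$ on $a^d$ via Lemma \ref{lemmaNA}, and both exploit the unbounded counter in the same essential way: the star forces the counter to be reset to zero between blocks, which is exactly where the reversal bound is abandoned. Your $\NCA$ construction is sound: $L_d = L_{na} \cup \overline{T} \in \NCM(1,1)$ by Lemma \ref{lemmaDCM} and closure under union with regular sets, each $\$$ serves as a local end-marker for the block simulation, and the zero-test available to counter machines lets the reset be done with stay-moves; the degenerate word $\$$ is handled by making the post-$\$$ state accepting. What the paper's approach buys is a single-alphabet witness (no new symbols); what yours buys is a cleaner combinatorial core.
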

\begin{proof}
Let $L' = \overline{(\Sigma^* T \Sigma^*)} (L_{na} \overline{(\Sigma^* T \Sigma^*)})^* \subseteq \Sigma^*$. 
It is clear that $L' \in \NCA$.
We will show that $T \subseteq L_{na}$ if and only if $\inf(L') = \Sigma^*$.

``$\Rightarrow$''
Assume $T \subseteq L_{na}$. Let $ w\in \Sigma^*$. If $w \in \overline{(\Sigma^* T \Sigma^*)}$, then $w \in L' \subseteq \inf(L')$.
Assume $w \notin \overline{(\Sigma^* T \Sigma^*)}$. Then 
$w \in \Sigma^* T \Sigma^*$. Then $w = u_0 v_1 u_1 \cdots u_{n-1} v_n u_n$, where $n \geq 1, v_1, \ldots, v_n \in T$, and $u_0, \ldots, u_n \notin \Sigma^* T \Sigma^*$, and so
$u_0, \ldots, u_n \in \overline{(\Sigma^* T \Sigma^*)}$. Also, 
$T \subseteq L_{na}$, and therefore $v_1, \ldots, v_n \in L_{na}$ and
$w \in L' \subseteq \inf(L')$.

``$\Leftarrow$''
Assume $\inf(L') = \Sigma^*$. Let $w\in T$. Then $w \in \inf(L')$. Since $w \in \inf(L') \cap T$, then $x = uwv \in L'$. Then
$x = u_0 v_1 u_1 \cdots u_{n-1} v_n u_n$, where $n \geq 1, v_1, \ldots, v_n \in L_{na}$, and $u_0, \ldots, u_n \in \overline{\Sigma^* T \Sigma^*}$.
If $w$ is an infix of $u_i$, for some $i$, then $u_i \in \Sigma^*T \Sigma^*$, a contradiction. If $w$ overlaps with $v_i$ for some $i$, then it must be exactly one $v_i$ by the structure of $T$ (initial and final states are only used once at beginning and end of words in $T$). Then $w\in L_{na}$.
 \end{proof}

The same undecidability is obtained with determinism, but an unrestricted
pushdown automaton is used.
\begin{prop}
Let $L\in \DPDA$. The question of whether $L$ is infix-dense is undecidable.
\end{prop}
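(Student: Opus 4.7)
The plan is to reduce from the halting of $Z$ on $a^d$, following the same template as the preceding $\NCA$-case proposition but replacing the nondeterministic witness $L_{na}$ by a deterministic, marker-based variant so that the resulting language is in $\DPDA$. The key ingredient is Lemma \ref{lemmaDCM}: the language $L_1 := \bigcup_{p \geq 1} \%^p L_{na}(p)$ lies in $\DCM(1,3)$, and because a deterministic reversal-bounded counter can be simulated by a deterministic pushdown, $L_1 \in \DPDA$; the $\%^p$-prefix externalizes into the input the ``position where the bad transition is witnessed'' that the $\NCA$ construction had to guess nondeterministically.

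I would then construct $L' \in \DPDA$ over $\Sigma_\% = \Sigma \cup \{\%\}$ along the lines of the $\NCA$ form $\overline{(\Sigma^* T \Sigma^*)}(L_{na}\overline{(\Sigma^* T \Sigma^*)})^*$, with each $L_{na}$-block replaced by an $L_1$-block whose leading $\%$ unambiguously signals the $\DPDA$ to switch from scanning a regular ``padding'' portion to running the stack-based $L_{na}(p)$-verification. Deterministic parseability is preserved using distinguishing leading characters and the closure of $\DPDA$ under union with regular languages. The intended equivalence is $\inf(L') = \Sigma_\%^* \iff T \subseteq L_{na}$, which by Lemma \ref{lemmaNA} is exactly ``$Z$ does not halt on $a^d$''. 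The ``$\Rightarrow$'' direction carries over from the $\NCA$ proof essentially unchanged: by the uniqueness of $q_0$ at the start and of the halting state at the end of any $T$-word, any $w \in T$ occurring as an infix of some $L'$-string must coincide exactly with some $v \in L_{na}(p) \subseteq L_{na}$ sitting inside an $L_1$-block, so $w \in L_{na}$.

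The main obstacle is the converse direction $T \subseteq L_{na} \Rightarrow \inf(L') = \Sigma_\%^*$. Unlike in the $\NCA$ construction, where $\overline{(\Sigma^* T \Sigma^*)}$-blocks and $L_{na}$-blocks can abut without markers, the mandatory $\%^p$-prefix of each $L_1$-block constrains the $\%$-patterns possible inside $L'$-strings, so a naive $R(L_1 R)^*$ would fail to realize infixes such as $t_1 a t_2$ (two $T$-infixes separated by a single $\Sigma$-letter) or strings with $\%$ in non-prefix positions. The fix I have in mind is to adjoin to $L'$ additional regular components that supply these problematic infix patterns directly---for instance the regular set of all $T$-infix-free strings over $\Sigma_\%^*$, which absorbs every $w$ containing no $T$-word at all, together with targeted regular fillers for the remaining patterns---each folded in via union with a regular language (preserving $\DPDA$-ness) and designed to introduce no new $T$-words so that the ``$\Rightarrow$'' argument remains intact.
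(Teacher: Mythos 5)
You have correctly identified the crux --- that interleaving the position markers $\%^p$ into the word itself destroys infix-density, since a word such as $t_1 t_2$ with $t_1, t_2 \in T$ adjacent can never appear as an infix when every $T$-block must be immediately preceded by its own $\%^p$ --- but the proposed repair cannot work. Any ``targeted regular filler'' that supplies a missing pattern such as $t_1 a t_2$ necessarily contains words of $T$ as infixes; unioning such a set into $L'$ makes those $T$-words infixes of $L'$ \emph{unconditionally}, which destroys the backward direction of the reduction (you would get $w \in \inf(L')$ for $w \in T$ regardless of whether $w \in L_{na}$). Conversely, any component that is $T$-infix-free, such as $\overline{\Sigma_{\%}^* T \Sigma_{\%}^*}$, covers only the words containing no $T$-infix at all and leaves every problematic pattern uncovered. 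There is no middle ground: the words you must realize as infixes but cannot are exactly those containing $T$-words in positions your marker scheme forbids, and no safely addable language separates them from $T$ itself.

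The paper resolves this with a genuinely different architecture that marker-interleaving cannot reach: all position information is pulled out of the payload and front-loaded into a prefix $r_m r_{m-1} \cdots r_1 \cent$ with $r_j = \%^{p_j} e^{q_j}$, where $p_j$ records the error position inside the $j$th $T$-block $y_j$ and $q_j = |u_{j-1}|$ records the length of the padding preceding it. A $\DPDA$ pushes this entire prefix onto its stack and then pops it while deterministically reading the unmarked payload $u_0 y_1 u_1 \cdots y_m u_m$, using $e^{q_j}$ to know when to stop scanning padding and $\%^{p_j}$ to run the $L_{na}(p_j)$ check. Because the payload carries no embedded markers, every word of $\Sigma_1^*$ occurs as an infix of some word of $L'$ whenever $T \subseteq L_{na}$, and the converse goes through as you describe. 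Storing an unbounded sequence of markers is precisely what forces the full pushdown rather than reversal-bounded counters; this prefix-encoding is the missing idea in your proposal.
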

\begin{proof}
Let $\Sigma_1 = \Sigma \cup \{\%, e, \cent\}$. Let
$$L' = \{ \begin{array}[t]{l} r_m r_{m-1} \cdots r_1 \cent u_0 y_1 u_1 \cdots y_m u_m \mid m \geq 0,
u_i \in \overline{\Sigma_1^* T \Sigma_1^*}, 0 \leq i \leq m,\\
								\hspace{.5cm} y_j \in T, r_j = \%^{p_j} e^{q_j},
								 q_j = |u_{j-1}|, y_j \in L_{na}(p_j) 
								 \mbox{~for~} 1 \leq j \leq m \}.\end{array} 
	$$								
(In the above set, the complementation is over $\Sigma_1^*$.)
First, $L'$ can be accepted by a $\DPDA$ as follows: create $M'$ 
that reads $r_m \cdots r_1$ and pushes each
symbol onto the pushdown, which is now (with bottom of pushdown
marker $Z_0$)
$$Z_0 \%^{p_m} e^{q_m} \cdots \%^{p_1} e^{q_1}.$$
Then for each $\%^{p_j} e^{q_j}$ on the pushdown from $1$ to $m$, 
$M'$ reads one symbol at a time from the input while popping one $e$, while in parallel verifying $u_{j-1} \in \overline{\Sigma_1^* T \Sigma_1^*}$. Then $M'$
verifies that $y_j \in L_{na}(p_j)$ as in Lemma \ref{lemmaDCM} (by 
popping $\%^{p_j}$ one symbol at a time until zero and then 
pushing on the pushdown simulating the counter). Finally $M'$ verifies 
$u_m \in \overline{\Sigma_1^* T \Sigma_1^*}$.

We claim that $\inf(L') = \Sigma_1^*$ if and only if $T \subseteq L_{na}$.

Assume $T \subseteq L_{na}$. Let $w \in \Sigma_1^*$. We will show $w \in \infx(L')$. Let $w = u_0 y_1 u_1 \cdots u_{m-1} y_m u_m$, where
$m \geq 0, y_1, \ldots, y_m \in T, u_0, \ldots, u_m \in \overline{\Sigma_1^* T \Sigma_1^*}$. Then for each $y_j, 1 \leq j \leq m, y_j \in L_{na}(p_j)$, for some $p_j$, and thus there exists 
$q_j$ such that
$q_j = |u_{j-1}|$.
Thus, $\%^{p_m} e^{q_m} \cdots \%^{p_1} e^{q_1} \cent w \in L',$ and $w \in \infx(L')$.

Assume $\infx(L') = \Sigma_1^*$. Let $w\in T$. Then there must exist $x,y$ such that $z = xwy \in L'$. 
Then $z = u_0 y_1 u_1 \cdots u_{m-1} y_m u_m$, where $y_1, \ldots, y_m \in T, u_0, \ldots, u_m \in \overline{\Sigma_1^* T \Sigma_1^*}$.
Necessarily, one of $y_1, \ldots, y_m$, $y_i$ say, must be $w$. This implies $w = y_i \in L_{na}(p_i)$, for some $p_i$. Hence, $w \in L_{na}$.
 \end{proof}

In contrast to the undecidability of marked-infix-density and suffix-density
for $\DCM(1,3)$ and $\NCM(1,1)$, for infix-density on reversal-bounded nondeterministic pushdown automata, it
is decidable. The main tool of the proof is the
known fact that the language of all words over the pushdown alphabet that can appear on the pushdown in an accepting computation is a regular language \cite{CFHandbook}.
\begin{prop} \label{prop7}
It is decidable, given $L$ accepted by a one-way reversal-bounded $\NPDA$,
whether $L$ is infix-dense.
\end{prop}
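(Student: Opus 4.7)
The plan is to construct an $\NPDA$ $M'$ accepting $\infx(L)$ from the given $\NPDA(l)$ machine $M$, and then decide whether $L(M')=\Sigma^*$ by combining the regular structure of reachable stack contents with a bounded-witness argument.

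First, I would invoke the classical result cited in the statement: for $M$, both the set $R_s$ of configurations $(p,\alpha)$ reachable in $M$ from the initial configuration, and the set $R_e$ of configurations from which an accepting configuration is reachable, are effectively regular when encoded as languages over the stack alphabet paired with a state. I would then construct $M'$ to operate in three phases on an input $w$: (a) guess a state $p$ and nondeterministically push a stack content $\alpha$ onto $M'$'s own pushdown while simulating a DFA verifying $(p,\alpha)\in R_s$; (b) simulate $M$'s transitions while reading $w$ from configuration $(p,\alpha)$, arriving at some $(p',\beta)$; (c) pop the stack while simulating a DFA verifying $(p',\beta)\in R_e$, accepting if this verification succeeds. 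Then $L(M') = \infx(L)$, and $M'$ is reversal-bounded: phase (a) is push-only, phase (b) inherits at most $l$ reversals from $M$, and phase (c) is pop-only, giving an overall reversal bound of at most $l+1$.

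The main obstacle is that universality of arbitrary reversal-bounded $\NPDA$ is undecidable (already for $\NPDA(1)$, i.e., linear context-free languages), so one cannot directly test $L(M')=\Sigma^*$. The resolution I would pursue is a bounded-witness argument: if $\infx(L)\neq\Sigma^*$ then there exists a minimal forbidden infix of length bounded by a computable function of $|M|$ and $l$. The delicate technical step is proving this bound; my plan is to argue via a pumping-style analysis on the combined finite control of $M'$ together with the DFAs for $R_s$ and $R_e$, using that any long forbidden word $w$ forces repetitions among the finitely many combinations of control state, top-of-stack symbol, and DFA state tracked across all would-be splicings of $w$ into accepting computations of $M$; these repetitions allow contraction of $w$ to a strictly shorter word that remains forbidden because the same obstruction applies on every placement. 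Once the bound is established, the decision procedure enumerates all candidate infixes up to the bound and tests each for membership in $\infx(L)$ (decidable via $M'$), declaring $\infx(L)=\Sigma^*$ iff every candidate is in $\infx(L)$.
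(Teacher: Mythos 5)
Your first two steps (building a reversal-bounded $\NPDA$ $M'$ for $\infx(L)$, and observing that universality of reversal-bounded $\NPDA$s is undecidable so one cannot simply test $L(M')=\Sigma^*$) are fine, but the entire difficulty of the proposition is concentrated in your ``bounded-witness'' lemma, and the pumping sketch you give for it does not work. First, the data you propose to track --- control state, top-of-stack symbol, and the states of the DFAs for the reachable and co-reachable stack languages --- is not enough to determine whether a partial computation that has read a prefix of $w$ can be completed to an accepting one: completability depends on the entire stack content, not on its top symbol, so the ``finitely many combinations'' on which your pigeonhole rests do not actually summarize the relevant information (this is exactly why the naive subset-construction argument that bounds non-universality witnesses for NFAs fails for pushdown automata). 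Second, even granting a repetition, contracting a forbidden word $w$ to a shorter $w'$ does not obviously keep it forbidden: $w'$ admits placements $xw'y\in L$ whose computations do not arise by contracting a computation attempting to read $w$, so ``the same obstruction applies on every placement'' is precisely what needs proof and is not supplied. Note also that your argument as sketched never uses any special property of the language $\infx(L)$ beyond its being accepted by a reversal-bounded $\NPDA$; since such a bound cannot exist for arbitrary reversal-bounded $\NPDA$s (it would decide their universality), any correct proof must exploit the infix-closure of $\infx(L)$ in an essential way, and yours does not.

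For comparison, the paper's proof avoids bounding witnesses for $M'$ altogether. For each state $q$ it defines $L^q$, the set of words read entirely between two visits to $q$ within some accepting computation, and proves each $L^q$ is \emph{regular}: the reversal bound forces the stack to move monotonically between two same-phase visits to $q$, so the pushed (or popped) stack word can be shuffled into the input and recognized by an NFA using the regularity of the reachable-stack language $Acc(q)$ and the co-reachable-stack language $co$-$Acc(q)$. Setting $L'=\bigcup_q L^q$, one shows $\infx(L)=\Sigma^*$ iff $\infx(L')=\Sigma^*$; the nontrivial direction considers $w^{|Q|+1}\in\infx(L)$ and applies the pigeonhole principle to find a full copy of $w$ read between two visits to the same state, hence $w\in\infx(L^q)$. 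This is where infix-density (the freedom to choose the superword $w^{|Q|+1}$) is used. Decidability then follows because $L'$ is an effectively constructed regular language. A correct completion of your approach would in effect have to reprove this reduction; as it stands, the key lemma is asserted rather than proved, and the proof strategy offered for it is unsound.
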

\begin{proof}
Let $M = (Q,\Sigma, \Gamma, \delta,q_0, Z_0,F)$ be a pushdown automaton that 
accepts by final state and never pops $Z_0$. The pushdown is
said to be empty if $Z_0$ is at the top of the pushdown. Also, assume
$M$ makes at most $l$ switches between increasing and decreasing the size of the pushdown. Assume without loss of generality that $Q$ is partitioned into sets
$Q = \bigcup_{0 \leq i \leq l} Q_i \cup \bigcup_{0 \leq i \leq l} \bar{Q_i}$, where
$Q_i$ consists of all states defined on or after the $i$th reversal and before the $i+1$st on a non-empty pushdown, and $\bar{Q_i}$ is the same on an empty pushdown. Also, 
assume without loss of generality that all transitions either
push one letter, keep the stack the same, or pop one letter, and there are no $\lambda$-transitions
that do not change the pushdown.

Let $q \in Q_i \cup \bar{Q_i}$ for some $i$. Let $L_q$ be the language
$$\{ w \mid (q_0, uwv, Z_0) \vdash_M^* (q,wv,\alpha) \vdash_M^* 
(q, v,\beta) \vdash_M^* (q_f, \lambda, \gamma), q_f \in F, \alpha,\beta, \gamma \in \Gamma^*\}.$$
It will be shown that $L^q$ is a regular language, for all $q\in Q$.
Let $h_{\Sigma}$ be the homomorphism from $(\Sigma\cup \Gamma)^*$ to $\Sigma$ that erases all letters of $\Gamma$ that fixes all letters of $\Sigma$, and let
$h_{\Gamma}$ be the homomorphism that erases all letters of $\Sigma$ and fixes
all letters of $\Gamma$.

Consider the languages $$Acc(q) = \{ \alpha \in \Gamma^* \mid (q_0, u, Z_0) \vdash_M^* (q, \lambda,\alpha), u \in \Sigma^*\}.$$ and
$$co\mbox{-}Acc(q) = \{\beta \in \Gamma^* \mid (q, v,\beta) \vdash_M^* (q_f, \lambda, \gamma), v \in \Sigma^*, q_f \in F\}.$$ It is shown in \cite{CFHandbook} that both of these languages are in fact regular languages. Moreover, the proofs contain effective constructions.

Then consider $L^q, q \in \bar{Q_i}$. If either $Acc(q)$ or $co\mbox{-}Acc(q)$ are empty, then so is $L^q$. If both are non-empty, then $L_q$
can be accepted by simulating $M$ with an NFA which can be done since all transitions are on $Z_0$.

Consider $L^q, q \in Q_i, i$ even. Thus, there are no decreasing transitions or transitions on an empty pushdown defined between states $q$ and $q$.
Create an interim NFA $M'$ accepting an interim language $L^q_1 \subseteq \Gamma^+ (\Sigma \cup \Gamma)^*$, where $M'$ does the following in parallel:
\begin{itemize}
\item Nondeterministically guesses a partition of the input into $\alpha y$,
where $\alpha \in \Gamma^+, y \in (\Sigma \cup \Gamma)^*$, verifies that 
$\alpha \in Acc(q)$, and $M'$ also remembers the last letter of $\Gamma$ in $\alpha$, and as it reads $y$, continues to remember the previous symbol from
$\Gamma$ encountered. Then it reads the remaining input
$y$ and starting in state $q$, simulates $M$ as follows: if the next letters are 
$a \in \Sigma$ followed by $d\in \Gamma$, then $M'$ simulates (just
by reading these letters $ad$ and switching states appropriately)
a transition that reads $a$ that is defined on the remembered pushdown letter
while pushing $d$ on the pushdown. 
Otherwise, if the next letter is $a\in \Sigma$ and a letter from $\Gamma$
does not follow, then $M'$ simulates a transition that reads $a$ and does not push on the remembered pushdown letter. If instead the next letter is $d\in \Gamma$,
then $M'$ simulates a pushing of $d$ on $\lambda$ input on the 
remembered pushdown letter.
At the end of the input, the simulated machine $M$
must be in state $q$. 
\item Also, $M'$ reads the input, and if 
$h_{\Gamma}(\alpha y)= \beta$ (this is $\alpha$ plus the word from $\Gamma$  shuffled into $y$), then $M'$ verifies that $\beta \in co\mbox{-}Acc(q)$.
\end{itemize}

\begin{claim}
$h_{\Sigma}(L^q_1) = L^q$.
\end{claim}
\begin{proof}
``$\subseteq$'' Let $s\in h_{\Sigma}(L_1^q)$. Thus, there exists
$t \in (\Sigma \cup \Gamma)^*$ such that $h_{\Sigma}(t) = s$ and
$t \in L_1^q$. Then $t = \alpha y, \alpha \in \Gamma^+, 
y \in (\Sigma \cup \Gamma)^*$, where $M'$
verifies $\alpha \in Acc(q)$.
Then $(q_0, u, Z_0) \vdash_M^* (q, \lambda, \alpha)$ for some $u \in \Sigma^*$.
Then, on each letter of $y$, $M'$ simulates $M$ on the last letter read from 
$\Gamma$, reading each letter from $\Sigma$ and pushing each letter from 
$\Gamma$ read starting and finishing in state $q$. Thus, 
$$(q, h_{\Sigma}(y), \alpha) \vdash_M^* (q, \lambda, \alpha h_{\Gamma}(y))
= (q, \lambda, h_{\Gamma}(\alpha y)).$$
Then since $M'$ verified $h_{\Gamma}(\alpha y) \in co\mbox{-}Acc(q)$,
this implies $(q, v, h_{\Gamma}(\alpha y)) \vdash_M^* (q_f, \lambda, \gamma),
q_f \in F, v \in \Sigma^*$. Hence $h_{\Sigma}(y) \in L^q$ and $s = h_{\Sigma}(y)$.

``$\supseteq$''
Let $s \in L^q$. Thus,
$$(q_0, usv, Z_0) \vdash_M^* (q, sv, \alpha) \vdash_M^* (q,v,\beta)
\vdash_M^* (q_f, \lambda, \gamma),$$
$q_f \in F$. Then $\beta = \alpha \mu$, for some $\mu\in \Gamma^*$.
Let
$$(p_0 = q, s_0 = sv, \alpha_0 = \alpha) \vdash_M (p_1, s_1, \alpha_1)
\vdash_M \cdots \vdash_M (p_n = q, s_n= v, \alpha_n = \beta),$$
be the derivation above between states $q$ and $q$ via transitions $t_1, \ldots, t_n$
respectively. Let $y$ be obtained by examining each $t_i$ in order, from $1$
to $n$, and concatenating $a \in \Sigma$ if $t_i$ consumes $a$ ($\lambda$
otherwise), $d\in \Gamma$ if $t_i$ pushes $d$ ($\lambda$ otherwise).
Then $\mu = h_{\Gamma}(y)$. We will show $\alpha y \in L(M')$. Indeed,
$\alpha \in Acc(q)$, and $M'$ simulates $M$ when reading $y$ ending in state 
$q$, and verifies that $\beta = \alpha \mu \in co\mbox{-}Acc(q)$.
Thus, $s \in h_{\Sigma}(L_1^q)$.
 \end{proof}

Similarly, if $L^q, q \in Q_i$, $i$ odd, then there are no pushing transitions or transitions on empty pushdown between states $q$ and $q$.
Then create an interim NFA $M'$ accepting an interim language $L^q_2 \subseteq (\Sigma \cup \Gamma)^* \Gamma^+$, where $M'$ does the following in parallel:
\begin{itemize}
\item Nondeterministically guesses a partition of the input into $y \alpha$, where $\alpha \in \Gamma^+$, and simulates $M$ from state $q$.
While reading a letter $a \in \Sigma$ followed by a letter from $\Gamma$, $M'$ simulates
a transition of $M$ that reads input letter $a \in \Sigma$ and pops the letter from 
$\Gamma$. Otherwise, if reading $a\in \Sigma$ not followed by a letter from
$\Gamma$, then $M'$ guesses the next letter $d\in \Gamma$ that will appear in
$y\alpha$, simulate a transition reading $a$ with $d$ on top of the pushdown
that does not change the pushdown, and eventually verify that $d$ is the
next letter of $\Gamma$. If reading $d\in \Gamma$ only, then $M'$ simulates
a $\lambda$-transition that pops $d\in \Gamma$. Then, when reaching the end of $y$, it verifies that the simulated machine is in state $q$, and verifies that the remaining input $\alpha \in co\mbox{-}Acc(q)^R$ 
(as $co\mbox{-}Acc(q)$ is a regular language, so is its reversal).
\item Also, $M'$ reads the input and if 
$h_{\Gamma}(y\alpha) = \beta$, then $M'$ verifies that $\beta \in Acc(q)^R$.
\end{itemize}

\begin{claim}
$h_{\Sigma}(L^q_2) = L^q$.
\end{claim}
\begin{proof}
``$\subseteq$''
Let $s \in h_{\Sigma}(L_2^q)$. Thus, there exists 
$t \in (\Sigma \cup \Gamma)^*$ such that $h_{\Sigma}(t) = s$ and
$t \in L_2^q$. Then $t = y \alpha, \alpha \in \Gamma^+, 
y \in (\Sigma \cup \Gamma)^*$. On each
letter of $y$, $M'$ simulates $M$ from state $q$ on a top-of-pushdown letter that is
nondeterministically guessed, then later verified when hitting the next 
letter of $\Gamma$ (the topmost symbol of the pushdown), on input 
letters from $\Sigma$ that are read, and reading letters from $\Gamma$
that are popped, ending in $q$.
Further, $M'$ verifies that $\alpha \in co\mbox{-}Acc(q)^R, 
h_{\Gamma}(y\alpha) = \beta \in Acc(q)^R$. Indeed,
letters are read in $M'$ in the same order that they are popped,
which is reversed from the languages $Acc(q)$ and $co\mbox{-}Acc(q)$,
which are the words that appear on the pushdowns from the bottom
towards the top.
Hence, there exists $u,v \in \Sigma^*$ such that
$$(q_0, usv, Z_0) \vdash_M^* (q, sv, \beta^R) \vdash_M^* (q,v,\alpha^R)
\vdash_M^* (q_f, \lambda, \gamma),$$ $q_f \in F$
since $\beta^R = h_{\Gamma}(\alpha^R y^R)$. Hence, $s \in L^q$.

``$\supseteq$''
Let $s \in L^q$. Thus,
$$(q_0, usv, Z_0) \vdash_M^* (q,sv,\beta^R) \vdash_M^* (q,v,\alpha^R)
\vdash_M^* (q_f, \lambda, \gamma),$$ $q_f \in F$. Therefore,
$\beta^R = \alpha^R \mu^R $ for some $\mu \in \Gamma^*$.
Let $$(p_0 = q, s_0 = sv, \alpha_0 = \beta^R) \vdash_M (p_1, s_1, \alpha_1)
\vdash_M \cdots \vdash_M (p_n = q, s_n= v, \alpha_n = \alpha^R)$$
be the derivation between the two configurations with $q$ above,
via transitions $t_1, \ldots, t_n$ respectively. Let $y$ be
obtained by examining each $t_i$ in order from $1$ to $n$ and 
concatenating $a \in \Sigma$ if $t_i$ consumes $a$ ($\lambda$ otherwise),
and $d\in \Gamma$ if $t_i$ pops $d$ ($\lambda$ otherwise). Then
$\mu^R = h_{\Gamma}(y)$. We will show $y \alpha^R \in L(M')$.
Indeed, $M'$ verifies $\beta = \mu \alpha \in Acc(q)^R$ and in parallel, $M'$ simulates $M$ from $q$, reading $y$ ending in state $q$ and 
verifies that $\alpha \in Acc(q)^R$. Further,
$s = h_{\Sigma}(y \alpha)$ and hence, $s \in h_{\Sigma}(L_2^q)$.

\
\end{proof}

Hence, $L^q$ is regular for all $q \in Q$ since regular languages are
closed under homomorphism.
Let $L' = \bigcup_{q \in Q} L^q$, which is also regular.

To conclude, it will be shown that $\inf(L) = \Sigma^*$ if and only if $\inf(L') = \Sigma^*$.

Assume that $\inf(L) = \Sigma^*$. Let $w \in \Sigma^*$. Then $w \in \inf(L)$. Consider $w' = w^{|Q|+1}$. Then $w' \in \inf(L)$, and by the pigeonhole
principal, an entire copy of $w$ has to be read between some state $q$ and itself. Then $w\in \inf(L^q)$.

The converse is trivial since $\inf(L') \subseteq \inf(L)$.

\prbox
\end{proof}

Next, we briefly examine the  reverse containments when testing if
$\Sigma^* \subseteq \varrho(L)$ and $\$ \Sigma^* \$ \subseteq \varrho(L)$ for
density and marked-density. Here, it is checked whether it is decidable to test
$\varrho(L) \subseteq R$ for regular languages $R$. In fact, we will
show a stronger result.

\begin{prop} \label{prop3}
It is decidable, given $L_1 \in \NPCM$
and $L_2 \in \DCM$,
whether $\varrho(L_1) \subseteq L_2$, where 
$\varrho \in \{\suff, \inf, \pref, \outf\}$.
\end{prop}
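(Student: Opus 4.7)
The plan is to reduce the containment problem to an emptiness test in $\NPCM$. By complementing inside $L_2$, we have
$$\varrho(L_1) \subseteq L_2 \iff \varrho(L_1) \cap \overline{L_2} = \emptyset.$$
Since $L_2 \in \DCM$ and $\DCM$ is closed under complement (the underlying machines are deterministic and reversal-bounded, so they can be made total and their final states flipped), we get $\overline{L_2} \in \DCM \subseteq \NCM$. Thus it suffices to produce, for each $\varrho$, an $\NPCM$ recognising $\varrho(L_1)$: then $\varrho(L_1) \cap \overline{L_2}$ is in $\NPCM$ (since $\NPCM$ is closed under intersection with $\NCM$: run the $\NCM$ in parallel, attaching its counters to the $\NPCM$ while preserving both counters' reversal bounds), and emptiness of $\NPCM$ is decidable by the cited result from Ibarra1978.

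So the heart of the proof is showing closure of $\NPCM$ under $\pref, \suff, \inf, \outf$. Let $M_1$ be an $\NPCM$ accepting $L_1$, with pushdown alphabet $\Gamma$ and reversal-bounded counters. I would construct the required $\NPCM$ $M'$ as follows. For $\pref$, simulate $M_1$ faithfully on the input and, at any nondeterministically chosen moment, switch into a mode that reads the remaining input via $\lambda$-like internal guessing of input symbols while continuing to simulate $M_1$; accept when the simulated machine reaches a final state with the internally-generated continuation. For $\suff$, do the dual: first, in a preamble phase, repeatedly guess an input letter internally and apply the corresponding transition of $M_1$ (updating the pushdown and counters but not consuming any real input), and then nondeterministically switch to reading the actual input and simulating $M_1$ on it, accepting at the end in a final state. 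For $\inf$, combine both phases: guess a phantom prefix, read the real input while simulating $M_1$, then guess a phantom suffix, and require the simulation to end in a final state. For $\outf$, read real input, switch to phantom mode in which $M'$ internally guesses a middle word and applies $M_1$'s transitions on it, then switch back and read real input, accepting in a final state.

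In every construction the pushdown behaviour is exactly that of $M_1$'s pushdown, and each counter undergoes at most as many reversals in $M'$ as in $M_1$ (the phantom phase uses the same transitions, so it contributes to the same reversal budget). Hence $M'$ is still an $\NPCM$, so $\varrho(L_1) \in \NPCM$.

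The main obstacle is convincing oneself that the phantom phases do not inflate the counter-reversal count beyond the $\NPCM$ bound and that guessing arbitrarily long phantom words does not cause nontermination when checking emptiness — this is handled because the $\NPCM$ emptiness algorithm handles nondeterminism and the bounds on counter reversals uniformly, and the pushdown is handled by the usual context-free techniques already built into the $\NPCM$ emptiness decision procedure. Assembling these pieces yields decidability of $\varrho(L_1) \subseteq L_2$ for all four operations.
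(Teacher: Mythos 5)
Your proposal is correct and follows essentially the same route as the paper's proof: complement $L_2$ within $\DCM$, observe that $\NPCM$ is closed under each of the four deletion operations (the paper simply asserts this as "easy to show" where you spell out the phantom-phase simulation), intersect the resulting $\NPCM$ with the $\DCM$ complement, and decide emptiness of the resulting $\NPCM$. No substantive differences.
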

\begin{proof}
It is easy to show that if $L_1$ is accepted by an $\NPCM$,
then $\varrho(L_1)$ can be accepted by an
$\NPCM$.  We can also construct a $\DCM$ that
accepts $\overline{L_2}$ \cite{Ibarra1978}.  We can then
construct an $\NPCM$ accepting $\varrho(L_1)  \cap  \overline{L_2}$
as $\NPCM$ is closed under intersection with $\NCM$.
The decidability of whether $\varrho(L) \subseteq L_2$  is equivalent
to the question of whether the $\NPCM$ accepting
$\varrho(L)  \cap  \overline{L_2}$ is empty, which is decidable, since
the emptiness problem for $\NPCM$s is
decidable \cite{Ibarra1978}.  
 \end{proof}

The languages used in the proofs of Lemmas
\ref{lemmaNA} and \ref{lemmaDCM} are used next to show
that $\varrho(L)$ does not belong in the same 
family as $L$, in general.

\begin{prop}
There is a language $L \in \DCM(1,3)$ (resp., $\NCM(1,1)$,\\
$\DCM(2,1))$ such that $\varrho(L)$  is not in $\DPCM$,
where $\varrho \in \{\suff, \inf, \outf \}$.
\end{prop}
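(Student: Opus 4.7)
The plan is to recycle the witness languages constructed in the proofs of Propositions~\ref{prop:suffSubset} and~\ref{prop2}, which already lie in the three target families $\DCM(1,3)$, $\DCM(2,1)$, and $\NCM(1,1)$. For each pair (family, $\varrho$), those proofs show that deciding whether $\$\Sigma_{\%}^*\$ \subseteq \varrho(L')$ (or the unmarked analogue) is Turing-equivalent to deciding whether the fixed Turing machine $Z$ halts on $a^d$, which is undecidable by Lemma~\ref{lemmaNA}. So the whole task is to turn "$\varrho(L') \in \DPCM$" into an algorithm that decides this inclusion.

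First I would record the following general fact: for any $K \in \DPCM$ and any regular $R$ over the same alphabet, the containment $R \subseteq K$ is decidable. Indeed, $\DPCM$ is effectively closed under complement~\cite{Ibarra1978}, so $\overline{K} \in \DPCM \subseteq \NPCM$; since $\NPCM$ is effectively closed under intersection with regular languages and has decidable emptiness~\cite{Ibarra1978}, one builds an $\NPCM$ for $R \cap \overline{K}$ and tests it for emptiness. This is the only closure/decidability toolkit the argument will use.

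Next, assume for contradiction that, for some pair $(\mathcal{F}, \varrho)$ under consideration, the corresponding witness $L' \in \mathcal{F}$ satisfies $\varrho(L') \in \DPCM$. Instantiating the general fact with $K = \varrho(L')$ and $R$ the regular target from the relevant earlier reduction (either $\$\Sigma_{\%}^*\$$ or $\Sigma_{\%}^*$) yields a decision procedure for whether $Z$ halts on $a^d$, contradicting Lemma~\ref{lemmaNA}. Hence $\varrho(L') \notin \DPCM$ in each case.

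The main obstacle is purely bookkeeping: matching each witness $L'$ to the appropriate pair $(\mathcal{F}, \varrho)$. The language $L' = \bigcup_{p \geq 1} \%^p \$ L_{na}(p) \$ \cup \$\overline{T}\$$ from Proposition~\ref{prop:suffSubset} lies simultaneously in $\DCM(1,3)$ and $\DCM(2,1)$ by Lemma~\ref{lemmaDCM} and handles both $\suff$ and $\inf$ over those families; the simpler $L' = \$L_{na}\$ \cup \$\overline{T}\$$ handles $\suff$ and $\inf$ for $\NCM(1,1)$. Proposition~\ref{prop2} supplies the corresponding witnesses for $\outf$ (obtained by prepending a single $\%$). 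In every case the same $\DPCM$ closure-and-emptiness argument applies verbatim, so once the witnesses are paired up, no new construction is required.
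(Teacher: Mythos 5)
Your proof is correct and follows essentially the same route as the paper: assume $\varrho(L') \in \DPCM$, complement it using closure of $\DPCM$ under complementation, intersect with a regular language, and test emptiness of the resulting $\NPCM$ to decide whether $Z$ halts on $a^d$. The only (immaterial) difference is that the paper uses the leaner witness $\bigcup_{p \geq 1} \%^p \$ L_{na}(p)\$$ without the $\$\overline{T}\$$ component and intersects $\overline{\varrho(L')}$ with $\$T\$$, whereas you keep the witnesses of Propositions \ref{prop:suffSubset} and \ref{prop2} verbatim and test $\$\Sigma_{\%}^*\$ \subseteq \varrho(L')$; both reduce the same halting question.
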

\begin{proof}
We first give a proof for $\DCM(1,3)$.
Consider $L' = \bigcup_{p \geq 1} \%^p \$ L_{na}(p)\$ \in \DCM(1,3)$ by Lemma \ref{lemmaDCM}.
For $\varrho \in \{\suff, \inf, \outf \}$, we claim that  $\varrho(L')$ cannot be accepted by any $\DPCM$. We know $\$L_{na}\$ \subseteq \varrho(L')$.  
For suppose
$\varrho(L')$ can be accepted by a $\DPCM$ $M_1$.
Then, since the family of languages accepted by $\DPCM$s is
closed under complementation \cite{IbarraYen}, 
we can construct a $\DPCM$ $M_2$ accepting $\overline{L(M_1)}$.
	Now using $M_2$, an algorithm can be constructed
	to determine whether $\$T\$\not\subseteq \$L_{na}\$$, which
	we know is a subset of $\varrho(L')$.
	\begin{enumerate}
           \item 	
	   Consider $T$, which can be accepted by a DFA $M_3$.
           \item  Construct a $\DPCM$ $M_4$ accepting $L(M_2) \cap L(M_3)$.
	\item Check if the language accepted by $M_4$ is empty.
                    This is possible since the emptiness problem for
		    $\NPCM$s (hence also for $\DPCM$s) is decidable
		    \cite{Ibarra1978}.
	\end{enumerate}

	\noindent
	By Lemma \ref{lemmaNA}, $a^d \in L(Z)$ if and only if $\$T\$ \not \subseteq
	\$ L_{na}\$$ if and only if the language
	accepted by $L(M_4) $ is not empty.
        It follows that $\varrho(L) \notin \DPCM$.

Similarly with $\$L_{na}\$$ for $\NCM(1,1)$ for suffix and infix, and
$\%\$L_{na}\$$ for outfix.

\
 \end{proof}
We note that the proof above also shows that if $L \in \NCM(1,1)$,
then $\pref(L)$ need not be in $\DPCM$.

\section{Bounded-Dense Languages}
\label{boundeddense}

Let $\varrho$ be an operation from $\Sigma^*$ to $\Sigma^*$.
Then a language $L$  is $\varrho$-bounded-dense over given words
$w_1,  \ldots , w_k $  if  $\varrho(L) =  w_1^* \cdots w_k^*$.
We will show below that determining bounded-denseness
is decidable for $\NPCM$ languages.

The following lemma is a generalization of
a similar result for $\NPDA$s in \cite{GinsburgCFLbook}:
\begin{lemma} \label{bounded1}
It is decidable, given two $\NPCM$s $M_1$ and $M_2$,
one of which accepts a bounded language that is a subset of
$w_1^* \cdots w_k^*$ (for given 
words $w_1, \ldots, w_k \in \Sigma^+$), whether $L(M_1) \subseteq L(M_2)$.
\end{lemma}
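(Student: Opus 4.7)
The plan is to reduce $L(M_1) \subseteq L(M_2)$ to the emptiness of $L(M_1) \cap \overline{L(M_2)}$ and then exploit the boundedness hypothesis to get around the fact that $\NPCM$ is not closed under complementation. The whole strategy hinges on being able to complement inside the bounded set $B = w_1^* \cdots w_k^*$, which I intend to do via semilinearity of the index set.

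First, by symmetry I would reduce to the case where $L(M_1) \subseteq B$. If instead $L(M_2) \subseteq B$, then a necessary prerequisite for containment is $L(M_1) \subseteq B$, which is itself decidable: construct an $\NPCM$ for $L(M_1) \cap \overline{B}$ (using closure of $\NPCM$ under intersection with regular languages, since $\overline{B}$ is regular) and test emptiness, which is decidable for $\NPCM$ by \cite{Ibarra1978}. If this intersection is non-empty, output ``no''; otherwise we have $L(M_1) \subseteq B$ and can proceed as in the symmetric case.

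Assuming now $L(M_1) \subseteq B$, I would form $L' = L(M_2) \cap B$, an $\NPCM$ language that is bounded. The key step is to invoke the effective semilinearity of $\NPCM$ Parikh images: the index set $I(L') = \{(n_1, \ldots, n_k) \in \natzero^k : w_1^{n_1} \cdots w_k^{n_k} \in L'\}$ is effectively a semilinear subset of $\natzero^k$. Since semilinear sets are effectively closed under complement, $\natzero^k \setminus I(L')$ is semilinear, and from it one can effectively construct a reversal-bounded multicounter machine $N$ accepting $B \setminus L(M_2) = \{w_1^{n_1} \cdots w_k^{n_k} : (n_1, \ldots, n_k) \notin I(L')\}$. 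Then I would build an $\NPCM$ $M'$ for $L(M_1) \cap L(N)$ (closure of $\NPCM$ under intersection with $\NCM$), and conclude that, under the standing assumption $L(M_1) \subseteq B$, we have $L(M_1) \subseteq L(M_2)$ iff $L(M') = \emptyset$, which is decidable.

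The main obstacle is the effective semilinearity step for the index set of the bounded $\NPCM$ language $L'$; everything else is routine closure and emptiness. I would handle this by translating $L' \subseteq w_1^* \cdots w_k^*$ into a language over $\{a_1, \ldots, a_k\}^*$ via the map $w_1^{n_1} \cdots w_k^{n_k} \mapsto a_1^{n_1} \cdots a_k^{n_k}$, checking that this map preserves $\NPCM$-acceptability on bounded inputs, and then applying the known effective semilinearity of Parikh images of $\NPCM$ languages (\cite{Ibarra1978}) to extract $I(L')$ as a semilinear set, after which Presburger-style manipulation yields the desired complement.
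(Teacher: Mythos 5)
Your proof is correct and follows essentially the same route as the paper's: both reduce the containment question to emptiness of an $\NPCM$ for $L(M_1)$ minus $L(M_2)$, with the crux being that a bounded $\NPCM$ language can be effectively complemented relative to $w_1^* \cdots w_k^*$. The paper obtains that complement by converting the bounded $\NPCM$ to an equivalent $\DCM$ (citing Ibarra--Seki) and complementing the $\DCM$, which is precisely the effective-semilinear-index-set argument you re-derive by hand; your case split runs in the mirror-image direction (reducing to ``$L(M_1)$ bounded'' rather than to ``$L(M_2)$ bounded''), but both reductions are equally valid.
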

\begin{proof}
We consider two cases.


\noindent
Case 1:  Suppose $L(M_2) \subseteq w_1^* \cdots w_k^*$.
From a bounded $\NPCM$ language, it is known that 
we can  construct a $\DCM$ machine $M_2'$
equivalent to $M_2$ \cite{boundedSemilin}. Then, we can
also construct a $\DCM$ $M_2''$ accepting $\overline{L(M_2')}$
\cite{Chiniforooshan2012}.  Next, we construct an $\NPCM$ $M$ that
simulates $M_1$ and $M_2''$ in parallel to accept
$L(M_1) \cap L(M_2'')$.  Clearly, $L(M_1) \subseteq L(M_2)$
if and only if  $L(M) = \emptyset$, which is decidable since the
emptiness problem for $\NPCM$s is decidable \cite{Ibarra1978}.

\vskip .25cm

\noindent
Case 2: Suppose $L(M_1) \subseteq w_1^*  \cdots w_k^*$.
First we construct an $\NPCM$ $M_2'$ that accepts
$L(M_2) \cap w_1^* \cdots w_k^*$.   Then
$L(M_1) \subseteq L(M_2)$ if and only if
$L(M_1)  \subseteq L(M_2')$, which is decidable by Case 1.

\end{proof}

\begin{cor}
It is decidable, given two $\NPCM$s $M_1, M_2$ accepting bounded
languages $L(M_1),L(M_2) \subseteq w_1^* \cdots w_k^*$,  whether
$L(M_1) \subseteq L(M_2)$  (resp., $L(M_ 1) = L(M_2)$).
\end{cor}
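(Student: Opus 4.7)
The plan is to observe that this corollary is an immediate consequence of Lemma \ref{bounded1}, together with the standard reduction of equality to two-way inclusion. The hypothesis of Lemma \ref{bounded1} only demands that \emph{one} of the two $\NPCM$s accept a bounded language contained in $w_1^* \cdots w_k^*$; in the situation of the corollary, \emph{both} hypotheses are assumed, so the lemma applies directly.

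Concretely, I would first apply Lemma \ref{bounded1} with the roles $(M_1, M_2)$ as given to decide whether $L(M_1) \subseteq L(M_2)$. This settles the inclusion statement. For the equality statement, I would then invoke Lemma \ref{bounded1} a second time with the roles swapped, namely $(M_2, M_1)$, to decide whether $L(M_2) \subseteq L(M_1)$; this is still legal because $L(M_1) \subseteq w_1^* \cdots w_k^*$ is bounded by assumption, so the hypothesis of the lemma (one of the two languages is bounded and contained in $w_1^* \cdots w_k^*$) is again met. Finally, $L(M_1) = L(M_2)$ holds if and only if both inclusions hold, which is decidable as the conjunction of two decidable properties.

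The ``main obstacle'' here is essentially absent, since all the real work has already been done in Lemma \ref{bounded1}: the decidability there was obtained by converting the bounded $\NPCM$ to an equivalent $\DCM$ via the result of \cite{boundedSemilin}, complementing in $\DCM$ via \cite{Chiniforooshan2012}, intersecting with an $\NPCM$, and then appealing to the decidability of $\NPCM$ emptiness \cite{Ibarra1978}. The only thing to check for the corollary is that the hypothesis of the lemma is symmetric enough to run in both directions, which it is. Hence no additional construction is required, and the corollary follows in two lines.
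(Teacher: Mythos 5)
Your proposal is correct and matches the intended argument: the paper states this as an immediate corollary of Lemma \ref{bounded1} without a written proof, and the two-way application of the lemma (using that both languages are bounded, so the lemma's hypothesis holds in either direction) plus the standard reduction of equality to mutual inclusion is exactly what is meant. No gaps.
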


Let 
$\varrho \in  \{\suff, \inf, \pref, \outf \}$.
Clearly, if $M$ is an $\NPCM$ accepting a language
$L(M)\subseteq w_1^* \cdots w_k^*$, we
can construct an $\NPCM$ $M'$ such that
$L(M') = \varrho (L(M))$, and $L(M')$ is bounded, but over 
$v_1^* \cdots v_l^*$, which are effectively constructable from
$w_1, \ldots, w_k$.
From Lemma \ref{bounded1}, by testing $w_1^* \cdots w_k^* \subseteq L(M')$ we have:
\begin{prop}
Let $\varrho \in \{\pref, \inf, \suff, \outf\}$.
It is decidable, given an $\NPCM$ $M$ accepting a
language $L(M) \subseteq w_1^* \cdots w_k^*$
(for given $w_1, \ldots, w_k$), whether
$L(M)$ is $\varrho$-bounded-dense. 
\end{prop}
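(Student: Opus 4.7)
The plan is to reduce the bounded-density question to two inclusion tests, each of which falls under Lemma \ref{bounded1}. First I would use the effective closure of $\NPCM$ under each of the operations $\varrho \in \{\pref, \suff, \inf, \outf\}$ to construct, from the input $\NPCM$ $M$, another $\NPCM$ $M'$ with $L(M') = \varrho(L(M))$. Since $L(M) \subseteq w_1^* \cdots w_k^*$, the set $\varrho(L(M))$ is itself bounded; indeed, a list of generators $v_1, \ldots, v_l$ for a bounded language $v_1^* \cdots v_l^*$ containing $\varrho(L(M))$ can be read off directly from $w_1, \ldots, w_k$ (as noted in the paragraph preceding the statement).

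The property to be decided is the equality $\varrho(L(M)) = w_1^* \cdots w_k^*$, which I would split into the two inclusions
\[
\varrho(L(M)) \subseteq w_1^* \cdots w_k^* \quad \text{and} \quad w_1^* \cdots w_k^* \subseteq \varrho(L(M)).
\]
For the first inclusion, let $N_1$ be an $\NPCM$ (in fact a DFA) accepting the bounded regular language $w_1^* \cdots w_k^*$. Since $L(N_1) \subseteq w_1^* \cdots w_k^*$ is bounded, Lemma \ref{bounded1} (Case 1, applied to the pair $M', N_1$) gives the decidability of $L(M') \subseteq L(N_1)$. For the second inclusion, again $L(N_1) = w_1^* \cdots w_k^*$ is bounded, so Lemma \ref{bounded1} (Case 2, applied to the pair $N_1, M'$) decides $L(N_1) \subseteq L(M')$. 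Answering ``yes'' exactly when both tests succeed decides $\varrho$-bounded-density.

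The only nontrivial ingredients are the effective closure of $\NPCM$ under $\varrho$ (standard: $\pref$ and $\suff$ by simulating extra input, $\inf$ and $\outf$ by combining the two) and the effective construction of the bounding expression $v_1^* \cdots v_l^*$ for $\varrho(L(M))$, both of which are already appealed to in the surrounding discussion. I do not anticipate a substantial obstacle; the entire argument is essentially a packaging of Lemma \ref{bounded1} applied in both directions, with the observation that the bounded-language hypothesis of that lemma is always satisfied by the regular target $w_1^* \cdots w_k^*$ regardless of which side of the inclusion $M'$ appears on.
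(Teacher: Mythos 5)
Your proposal is correct and follows essentially the same route as the paper: construct an $\NPCM$ $M'$ for $\varrho(L(M))$, observe it is bounded over an effectively computable $v_1^* \cdots v_l^*$, and reduce to the inclusion test of Lemma \ref{bounded1}. If anything you are slightly more careful than the paper, which only mentions testing $w_1^* \cdots w_k^* \subseteq L(M')$, whereas the definition of $\varrho$-bounded-dense requires equality and the reverse inclusion $\varrho(L(M)) \subseteq w_1^* \cdots w_k^*$ is not automatic (e.g.\ a prefix of a word in $w_1^* \cdots w_k^*$ need not lie in that set), so your explicit second test is a genuine, if minor, improvement.
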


\section{Conclusions}

This paper studies decidability problems involving testing whether
a language $L$ is $\varrho$-dense and $\varrho$-marked-dense, depending on
the language family of $L$. For the prefix operation, all are decidable for $\DCM$,
but undecidable for $\NCM(1,1)$, and thus the problem has been completely
characterized in terms of restrictions on reversal-bounded multicounter
machines. For suffix, both density and marked-density are decidable for
$\DCM(1,1)$, but not for $\DCM(1,3)$ and $\NCM(1,1)$, and therefore this
has also been completely characterized. For infix, marked-density is
decidable for $\DCM(1,1)$, but not for $\DCM(1,3)$ and $\NCM(1,1)$.
For infix-density however, it is decidable for nondeterministic reversal-bounded pushdown automata,
but undecidable for deterministic pushdown automata and nondeterministic
one-counter automata. It remains open for $\DCM$ and $\NCM$ when there are
at least two counters, and also for deterministic one-counter automata.
For outfix, marked-density is undecidable for $\DCM(1,3), \DCM(2,1)$ and
$\NCM(1,1)$ but is open for $\DCM(1,1)$. All variants are open for outfix-density.

In Section \ref{boundeddense}, results on bounded-dense languages are presented where the words $w_1, \ldots, w_k$ are given. 
In particular, for each of prefix, infix, suffix and outfix, it is decidable
for $\NPCM$ languages that accept bounded languages, whether they
are $\varrho$-bounded-dense.
%

\begin{table}\begin{center}
\begin{tabular}{|l|c|c|c|c|}
\hline
\textbf{unmarked density} & infix  & suffix & prefix & outfix
\\\hline 
 \hline $\DCM(1,1)$  & $\checkmark_{C3}$ & $\checkmark_{C3}$ & $\checkmark_{C2}$ & ?
 \\\hline $\DCM(1,3)$  & $\checkmark_{P10}$ & $\times_{P7}$ & $\checkmark_{C2}$ &?
 \\\hline $\DCM(2,1)$  & ? & $\times_{P7}$ & $\checkmark_{C2}$ & ?
 \\\hline $\DCM$     & ? & $\times_{P7}$ & $\checkmark_{C2}$ & ?
 \\\hline $\DPDA(1)$     & $\checkmark_{P10}$ & ? & $\checkmark_{P2}$ & ?
 \\\hline $\DPDA$     & $\times_{P9}$ & $\times_{P7}$ & $\checkmark_{P2}$ & ?
 \\\hline $\NCM(1,1)$  & $\checkmark_{P10}$ & $\times_{P7}$ & $\times_{C4}$& ?
 \\\hline $\NCM$     & ? & $\times_{P7}$ & $\times_{C4}$ & ?
 \\\hline $\NCA$     & $\times_{P8}$ & $\times_{P7}$ & $\times_{C4}$ & ?
 \\\hline rev-$\NPDA$     & $\checkmark_{P10}$ & $\times_{P7}$ & $\times_{C4}$ & ?
 \\\hline $\NPDA$     & $\times_{P8}$ & $\times_{P7}$ & $\times_{C4}$ & ?
\\ \hline \hline
\textbf{marked density} & infix  & suffix & prefix & outfix
\\\hline 
 \hline $\DCM(1,1)$  & $\checkmark_{C3}$ & $\checkmark_{C3}$ & $\checkmark_{C2}$ & ?
 \\\hline $\DCM(1,3)$  & $\times_{P4}$ & $\times_{P4}$ & $\checkmark_{C2}$ & $\times_{P5}$
 \\\hline $\DCM(2,1)$  & $\times_{P4}$ & $\times_{P4}$ & $\checkmark_{C2}$ & $\times_{P5}$
 \\\hline $\DCM$     & $\times_{P4}$ & $\times_{P4}$ & $\checkmark_{C2}$ & $\times_{P5}$
 \\\hline $\DPDA(1)$     & $\times_{P6}$ & $\times_{P6}$ & $\checkmark_{P2}$ & $\times_{P6}$
 \\\hline $\DPDA$     & $\times_{P6}$ & $\times_{P6}$ & $\checkmark_{P2}$ & $\times_{P6}$
 \\\hline $\NCM(1,1)$ & $\times_{P4}$ & $\times_{P4}$ & $\times_{P4}$ & $\times_{P5}$
 \\\hline $\NCM$     & $\times_{P4}$ & $\times_{P4}$ & $\times_{P4}$ & $\times_{P5}$
 \\\hline $\NCA$     & $\times_{P4}$ & $\times_{P4}$ & $\times_{P4}$ & $\times_{P5}$
 \\\hline rev-$\NPDA$   & $\times_{P4}$ & $\times_{P4}$ & $\times_{P4}$ & $\times_{P5}$
 \\\hline $\NPDA$  & $\times_{P4}$ & $\times_{P4}$ & $\times_{P4}$ & $\times_{P5}$
 \\\hline
\end{tabular} \end{center}
\caption{Summary of results, in the top half of the table with different types of density, and in the bottom half of the table with different types of marked density. A checkmark represents decidability, a cross is undecidable, and a question mark represents an open problem. The proposition proving each result is listed as subscript (with C being a corollary, and P a proposition).}
\label{tab:summary}
\end{table}



\begin{thebibliography}{10}

\bibitem{CFHandbook}
J.~Autebert, J.~Berstel and L.~Boasson, {\em Handbook of Formal Languages}
  (Springer-Verlag, Berlin, 1997), Berlin, ch.~Context-Free Languages and
  Pushdown Automata.

\bibitem{Baker1974}
B.~S. Baker and R.~V. Book, Reversal-bounded multipushdown machines, {\em
  Journal of Computer and System Sciences} {\bf 8}(3)  (1974)  315--332.

\bibitem{TheoryOfCodes}
J.~Berstel and D.~Perrin, {\em Theory of Codes} (Academic Press, Orlando,
  1985).

\bibitem{Chiniforooshan2012}
E.~Chiniforooshan, M.~Daley, O.~H. Ibarra, L.~Kari and S.~Seki, One-reversal
  counter machines and multihead automata: Revisited, {\em Theoretical Computer
  Science} {\bf 454}  (2012)  81--87.

\bibitem{EIMDeletion2015}
J.~Eremondi, O.~Ibarra and I.~McQuillan, Deletion operations on deterministic
  families of automata, {\em Lecture Notes in Computer Science\/},  eds.
  R.~Jain, S.~Jain and F.~Stephan {\em 12th Annual Conference on Theory and
  Applications of Models of Computation, TAMC 2015, Singapore} {\bf 9076}
  (2015), pp. 388--399.

\bibitem{GinsburgCFLbook}
S.~Ginsburg, {\em The Mathematical Theory of Context-Free Languages}
  (McGraw-Hill, Inc., New York, NY, USA, 1966).

\bibitem{GinsburgDPDAs}
S.~Ginsburg and S.~Greibach, Deterministic context free languages, {\em
  Information and Control} {\bf 9}(6)  (1966)  620--648.

\bibitem{HU}
J.~E. Hopcroft and J.~D. Ullman, {\em Introduction to Automata Theory,
  Languages, and Computation} (Addison-Wesley, Reading, MA, 1979).

\bibitem{IbarraKim}
O.~Ibarra and C.~Kim, A useful device for showing the solvability of some
  decision problems, {\em Proceedings of the Eighth Annual ACM Symposium on
  Theory of Computing\/},  {\em STOC '76}, (ACM, New York, NY, USA, 1976), pp.
  135--140.

\bibitem{IbarraYen}
O.~Ibarra and H.~Yen, On the containment and equivalence problems for two-way
  transducers, {\em Theoretical Computer Science} {\bf 429}  (2012)  155--163.

\bibitem{Ibarra1978}
O.~H. Ibarra, Reversal-bounded multicounter machines and their decision
  problems, {\em Journal of the ACM} {\bf 25}(1)  (1978)  116--133.

\bibitem{boundedSemilin}
O.~H. Ibarra and S.~Seki, Characterizations of bounded semilinear languages by
  one-way and two-way deterministic machines, {\em International Journal of
  Foundations of Computer Science} {\bf 23}(6)  (2012)  1291--1306.

\bibitem{denseIto}
M.~Ito, Dense and disjunctive properties of languages, {\em Fundamentals of
  Computation Theory 1993\/},  ed. Z.~\'Esik, {\em Lecture Notes in Computer
  Science} {\bf 710} (Springer Berlin Heidelberg, 1993), pp. 31--49.

\bibitem{JKT}
H.~J\"{u}rgensen, L.~Kari and G.~Thierrin, Morphisms preserving densities, {\em
  International Journal of Computer Mathematics} {\bf 78}  (2001)  165--189.

\bibitem{CodesHandbook}
H.~J\"urgensen and S.~Konstantinidis, {\em Handbook of Formal Languages}
  (Springer-Verlag, Berlin, 1997), Berlin, ch.~Codes.

\bibitem{Minsky}
M.~L. Minsky, Recursive unsolvability of {P}ost's problem of ``tag'' and other
  topics in theory of {T}uring {M}achines, {\em Annals of Mathematics} {\bf
  74}(3)  (1961)  pp. 437--455.

\bibitem{Post}
E.~Post, {A variant of a recursively unsolvable problem}, {\em Bulletin of the
  American Mathematical Society} {\bf 52}  (1946)  264--268.

\bibitem{Senizergues}
G.~S\'enizergues, {$L(A)=L(B)?$} decidability results from complete formal
  systems, {\em Theoretical Computer Science} {\bf 251}(1--2)  (2001)  1--166.

\bibitem{shyr}
H.~J. Shyr, {\em {Free Monoids and Languages}}, third edn. (Hon Min Book
  Company, Taichung, 2001).

\end{thebibliography}
\end{document}